\documentclass[11pt]{article}

\usepackage[letterpaper,top=2.4cm,bottom=2.4cm,left=2.6cm,right=2.6cm]{geometry}
\usepackage[T1]{fontenc}
\usepackage{amsmath,amssymb,amsthm}
\usepackage{zed-csp}          
\usepackage{tikz}
\usetikzlibrary{calc,arrows,arrows.meta,positioning}
\usepackage[most]{tcolorbox}
\usepackage{booktabs}
\usepackage{enumitem}
\usepackage{xcolor}
\usepackage[colorlinks=true,allcolors=blue]{hyperref}
\usepackage{xurl}

\theoremstyle{plain}
\newtheorem{lemma}{Lemma}
\newtheorem{theorem}{Theorem}

\newcommand{\owner}{\mathrm{owner}}
\newcommand{\bal}{\mathrm{bal}}

\definecolor{humanblue}{RGB}{0,90,160}
\definecolor{aigreen}{RGB}{0,120,0}
\newtcolorbox{humanprompt}{colback=humanblue!4,colframe=humanblue,
  title=Human Prompt,fonttitle=\bfseries\small,sharp corners,boxrule=0.8pt,
  left=5pt,right=5pt,top=3pt,bottom=3pt,breakable}
\newtcolorbox{aireply}{colback=aigreen!4,colframe=aigreen,
  title=AI Reply,fonttitle=\bfseries\small,sharp corners,boxrule=0.8pt,
  left=5pt,right=5pt,top=3pt,bottom=3pt,breakable}
\newtcolorbox{promptexcerpt}{colback=humanblue!4,colframe=humanblue,
  title=Human prompt (excerpt),fonttitle=\bfseries\small,sharp corners,boxrule=0.8pt,
  left=5pt,right=5pt,top=3pt,bottom=3pt,breakable}

\title{\textbf{Trust the Spec, Not the Code}\\[2pt]
\large A Specification-First, AI-Assisted Case Study in Online Banking}
\author{Eitan Farchi\thanks{IBM Research. Email: \texttt{farchi@il.ibm.com}.}}
\date{\today}

\begin{document}
\maketitle

\begin{abstract}
\noindent
Formal specification promises early error detection, explicit invariants, and
correctness by design, yet its notational cost has kept it out of mainstream
practice. We argue that AI removes much of that cost: natural language enriched
with lightweight mathematics, written in \LaTeX, can serve as an intermediate
specification language that is precise enough to reason over and prove, while a
large language model (LLM) reviews it for ambiguity, drafts proofs, and generates
the implementation. The specification becomes the artifact one authors, reviews,
proves, and refines; the code becomes regenerable output.
This paper is a follow-on to a prior study that established the discipline on an
organizational-knowledge-growth simulation~\cite{predecessor2026}. Here we
replicate the discipline in a different domain---an online-banking fund-transfer
service---and extend it. The two domains share one spine: a
\emph{conservation invariant} (knowledge in the prior study, money here), which
suggests the approach generalizes across domains. We contribute: (i) a
second, independent case study of the method; (ii) a stress-test of the method on a
richer problem---\emph{scheduled/recurring} transfers---whose generated code grows
substantially while the invariant and its proof do not; (iii) an \emph{AI-proposed runtime
coverage model} for invariants (``never violated $\neq$ covered''); and (iv) a Z
formalization,
including paired success/failure operation schemas and an invariant proved over
the inductive set of all reachable configurations, together with an experiment in
which the AI proposes the Z interfaces itself. We are explicit about the method's
limits: the proofs and runtime checks live at the specification level and do not
establish that the generated code refines the specification---that step is
delegated to the AI. This is a case study, not a controlled experiment.
\end{abstract}

\noindent\textbf{Keywords:} specification-driven development; formal methods;
Z notation; conservation invariants; large language models; AI-assisted software
engineering.

\section{Introduction}
The benefits of formal specification have never been seriously in doubt: design
flaws surface before code exists, assumptions are written down as explicit
invariants rather than buried in implementation, and one can reason about a system
instead of debugging it into shape. What kept these benefits out of industry was
cost---heavyweight notations such as Z, VDM, and B demand mathematical maturity to
write and to maintain, the pool of engineers who can author or even read such
specifications is small, and so teams ship informal prose and defer correctness to
testing and debugging~\cite{spivey1992z,woodcock1996using,jones1990vdm,abrial1996bbook}.

Large language models change this cost structure. Modern LLMs can read
specifications written in natural language augmented with lightweight mathematics,
flag ambiguities and inconsistencies, propose invariants and proofs, and generate
running code~\cite{chen2021codex,austin2021program}. This enables a shift in what
the engineer treats as primary: the \emph{specification} becomes the artifact one
authors, reviews, proves, and refines, while the \emph{code} becomes regenerable
output that need not be read line by line. Validation shifts left---design flaws
are caught in the specification, where they are cheap, rather than in the code,
where they are not.

A prior study introduced this discipline and applied it to a simulation of
organizational knowledge growth~\cite{predecessor2026}. A natural question is
whether the discipline transfers to a different domain with genuinely different
subject matter. This paper answers that question with a second case study in
online banking. It also extends the \emph{method} in two ways---a runtime coverage
model for the central invariant and a Z formalization of the interfaces---and
stress-tests it against a richer \emph{problem}, scheduled/recurring transfers,
where the operation grows substantially while the invariant and its proof do not.
Crucially, both domains turn out to rest on the same structural device---a
conservation invariant---which is the clearest evidence we have that the approach
is not tied to one problem.

\paragraph{Research questions.}
\begin{description}[leftmargin=2.2em,itemsep=2pt,topsep=3pt]
  \item[RQ1.] Does the specification-first, AI-assisted discipline transfer to a
  new domain (online banking) with a different conserved quantity?
  \item[RQ2.] Does the discipline scale as the specification grows in operational
  complexity (from one-time to scheduled/recurring transfers)?
  \item[RQ3.] What does it take to establish, and to \emph{cover} at runtime, the
  central invariant---and where are the honest limits of the confidence obtained?
\end{description}

\paragraph{Contributions (stated as a delta over~\cite{predecessor2026}).}
The first two contributions concern \emph{scope} (a new domain and a richer
problem); the last two are \emph{methodological additions} beyond the predecessor's
specify--prove--assert loop.
\begin{enumerate}[leftmargin=1.8em,itemsep=2pt,topsep=3pt]
  \item A second, independent case study of the discipline in a new domain, giving
  first evidence of \emph{cross-domain generalization}: the conservation-invariant
  spine recurs with money in place of knowledge.
  \item A \emph{stress-test of the method} on a richer problem---%
  \emph{scheduled/recurring} transfers---in which the generated code grows markedly
  while the invariant and its proof are unchanged.
  \item \emph{Methodological addition 1:} a \emph{runtime coverage model} for
  invariants---proposed by the AI when asked what to check, then reviewed and found
  adequate---that distinguishes ``never observed to fail'' from ``all
  invariant-relevant paths exercised.''
  \item \emph{Methodological addition 2:} a \emph{Z interface layer}, with paired
  success/failure ($\Xi$) schemas and an invariant proved over the inductive set of
  all reachable configurations, plus an experiment in which the AI proposes the Z
  interfaces from the natural-language specification.
\end{enumerate}

\section{Background and Relationship to Prior Work}
\paragraph{Formal specification and Z.}
Z is a state-based specification notation built on typed set theory and
first-order logic; a \emph{schema} groups a piece of state (or a state transition)
with the predicates that constrain it, and at the code level corresponds to a typed
interface with invariants and operation contracts~\cite{spivey1992z,woodcock1996using}.
We use Z for the interface layer because pre/postconditions and state invariants
become explicit and reviewable, while the AI is free to implement the operation in
any way it chooses so long as the contract is met.

\paragraph{Conservation invariants.}
A conservation invariant asserts that some global quantity is preserved by every
operation. Such invariants are attractive targets for this discipline: they are
easy to state, provable by induction over operations, and cheap to assert at
runtime. The prior study used conservation of knowledge items; here the conserved
quantity is money.

\paragraph{The authoring flow.}
The discipline proceeds along a pipeline: \emph{natural} language $\rightarrow$
\emph{natural\_math} (natural language augmented with lightweight mathematics)
$\rightarrow$ \emph{proof} of the key invariant $\rightarrow$ \emph{Z} interfaces
$\rightarrow$ generated \emph{code}. Reviewing the proof appears to focus the
ambiguity feedback, tightening the specification before any code exists. (In this
study the Z interfaces were in fact formalized after the code; see
Sections~\ref{sec:method} and~\ref{sec:z}.)

\paragraph{Relationship to the predecessor.}
This paper is a companion to~\cite{predecessor2026}, which established the method
and its motivation on an organizational-knowledge-growth simulation. We do not
re-argue that motivation or re-describe that example; we cite it and focus on what
is new. Table~\ref{tab:diff} summarizes the differences.

\begin{table}[t]
\centering
\renewcommand{\arraystretch}{1.25}
\begin{tabular}{@{}p{0.24\textwidth} p{0.32\textwidth} p{0.34\textwidth}@{}}
\toprule
\textbf{Dimension} & \textbf{Predecessor~\cite{predecessor2026}} & \textbf{This paper} \\
\midrule
Domain & Organizational knowledge growth & Online banking (fund transfer) \\
Conserved quantity & Knowledge items & Money (sum of account balances) \\
Operations & Single simulation & One-time + scheduled/recurring transfers \\
Proof scope & Per-stage invariant & Inductive set of all reachable configurations \\
Coverage model & --- & Runtime coverage (never-violated $\neq$ covered) \\
Z formalization & Limited & Paired success/fail ($\Xi$) schemas; AI-suggested interfaces \\
\bottomrule
\end{tabular}
\caption{Differentiation from the predecessor study.}
\label{tab:diff}
\end{table}

Beyond the immediate predecessor, the work sits alongside classical formal-methods
practice~\cite{spivey1992z,woodcock1996using,jones1990vdm,abrial1996bbook} and
recent work on LLM-based code generation~\cite{chen2021codex,austin2021program};
it differs from automated program verifiers such as Dafny~\cite{leino2010dafny} in
that we do not mechanically verify the generated code---we verify the
\emph{specification} and treat the code as regenerable output (see
Section~\ref{sec:threats}).

\section{Method: One Loop, Applied to Every Problem}
\label{sec:method}
The discipline is a single loop applied to each problem:
\begin{enumerate}[leftmargin=1.8em,itemsep=2pt,topsep=3pt]
  \item \textbf{AI reviews the specification}, flagging ambiguities and
  inconsistencies before any code exists.
  \item \textbf{The human draws the line}, resolving each ambiguity and deciding
  what is controlled versus left to the AI.
  \item \textbf{Prove the invariant} (here, conservation), by induction over
  operations.
  \item \textbf{Turn the invariant into a runtime check} that is asserted as the
  simulation runs.
  \item \textbf{Fix the interface in Z}, pinning state and operations down as
  precise contracts.
  \item \textbf{Generate the code} and leave it uninspected; if the specification
  is right, the code is regenerable.
\end{enumerate}
The list above is the \emph{prescribed} order, in which the Z interfaces are fixed
before code is generated. This case study was in fact conducted \emph{code-first}:
the natural\_math specification, proof, and runtime assertion came first
(Sections~\ref{sec:onetime}--\ref{sec:coverage}), and the Z formalization was added
afterward as a separate pass (Section~\ref{sec:z}) that tightened and re-verified
the contracts. We report the study in that historical order for faithfulness, but
recommend the prescribed Z-before-code order going forward (see
Section~\ref{sec:z} and the conclusion); the mismatch is itself an observation
about how the discipline tends to be practised before it is internalized.
The remainder of the paper applies this loop twice (one-time and scheduled
transfers), adds the coverage model, and develops the Z layer.

\section{Case Study A: One-Time Transfer}
\label{sec:onetime}
\paragraph{Specification (natural\_math).}
This case study elaborates an existing online-banking specification---the
\emph{Instantpay} mobile-banking requirements document~\cite{instantpay-srs}---and
in particular its fund-transfer requirement 3.2.1.7 (functional requirement~1.7).
We begin with the simplest case, a one-time transfer; Case Study~B
(Section~\ref{sec:scheduled}) then develops the further transfer types the same
requirement calls for. We make the requirement precise as a \emph{natural\_math}
specification, as follows.
We are given a set of users $U=\{u_1,\dots,u_n\}$ and a set of accounts
$AC=\{ac_1,\dots,ac_k\}$, together with an ownership function
$AC(\cdot):U\rightarrow \mathcal{P}(AC)$ assigning to each user the accounts it
owns. Ownership partitions the accounts: for $u_i\neq u_j$,
$AC(u_i)\cap AC(u_j)=\emptyset$ and $\bigcup_{u\in U}AC(u)=AC$, so every account
belongs to exactly one user. A balance function $B:AC\rightarrow\mathbb{N}$ gives
each account a non-negative balance.

The one-time transfer accepts $(u_i,ac_i,u_j,ac_j,f)$ with $ac_i\in AC(u_i)$,
$ac_j\in AC(u_j)$, and amount $f$. If $B(ac_i)-f\ge 0$ it updates
$B(ac_i)\mapsto B(ac_i)-f$ and $B(ac_j)\mapsto B(ac_j)+f$; otherwise it does
nothing.

\begin{promptexcerpt}\small
Generate a Python program that implements the above. Simulation: three users, six
accounts (user 1 owns accounts 1--2, user 2 owns 3--4, user 3 owns 5--6), initial
balances chosen randomly; at each step choose a random transfer and report inputs
and result. (Excerpt; the full prompt and the AI's reply appear in
Appendix~\ref{app:original}.)
\end{promptexcerpt}

\noindent The generated program is 106 lines of Python
(\href{https://colab.research.google.com/drive/1HOt6whJp9vp1MiJTyrfYD0FyLORHBRNm?usp=sharing}{generated code}); inspection of the
execution log suggested it was correct, but we increase confidence by proving and
then asserting an invariant rather than by reading the code.
The simulation proceeds in discrete \emph{stages}; each stage selects one transfer
and either executes it (if funds suffice) or leaves the state unchanged. In
Case~Study~B (Section~\ref{sec:scheduled}) a clock is added, advancing by one time
unit per stage.

\paragraph{Ambiguities surfaced by review.}
The target property is a \emph{conservation invariant}: the total amount of money
held across all accounts, $\sum_{ac\in AC} B(ac)$, is unchanged by any transfer.
Asked to prove this invariant, the AI first surfaced issues that had to be resolved
for even the \emph{statement} to be well-defined:
\begin{enumerate}[leftmargin=1.8em,itemsep=1pt,topsep=3pt]
  \item \textbf{Summation notation.} $\sum_{ac_i\in AC}ac_i$ is ill-formed; the
  summand must be the \emph{balance} $B(ac_i)$, not the account identifier.
  \item \textbf{Negative balances.} The specification does not say whether a
  transfer may drive a balance negative; the proof holds either way, as it uses
  only the cancellation $-f+f=0$.
  \item \textbf{General vs.\ simulation state.} The lemma must hold for arbitrary
  $AC$ and $B$, not merely the fixed six-account configuration.
  \item \textbf{Meaning of ``amount of money.''} The informal phrase in the source
  specification was pinned down, at the AI's prompting, to the sum of the values of
  the balance function $B$---the definition adopted in the invariant stated above.
\end{enumerate}

\paragraph{The invariant, proved.}
\begin{lemma}[Conservation of total money]
\label{lem:money}
Let $B:AC\rightarrow\mathbb{N}$ be the balance function before a one-time transfer
of amount $amt$ from account $srcA$ to account $tgtA$, and let $B'$ be the updated
balance function with
$B'(a)=B(a)-amt$ if $a=srcA$, $B'(a)=B(a)+amt$ if $a=tgtA$, and $B'(a)=B(a)$
otherwise. Then $\sum_{a\in AC}B'(a)=\sum_{a\in AC}B(a)$.
\end{lemma}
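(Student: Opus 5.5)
Because $AC$ is finite, the plan is to split the sum over $AC$ into the accounts the transfer touches and the ones it does not. Let $R = AC\setminus\{srcA,tgtA\}$. On $R$ we have $B'(a)=B(a)$ by the third clause of the definition, so $\sum_{a\in R}B'(a)=\sum_{a\in R}B(a)$. This reduces the lemma to comparing the contributions of $srcA$ and $tgtA$.

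The one delicate point is the case split on whether $srcA=tgtA$, since the piecewise definition of $B'$ is ambiguous when the two coincide.

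\emph{Case $srcA\neq tgtA$.} Finite additivity gives
\[
\sum_{a\in AC}B'(a)=B'(srcA)+B'(tgtA)+\sum_{a\in R}B'(a)=(B(srcA)-amt)+(B(tgtA)+amt)+\sum_{a\in R}B(a).
\]
The $-amt$ and $+amt$ cancel, leaving $\sum_{a\in AC}B(a)$. This is exactly the cancellation $-f+f=0$ noted among the ambiguities.

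\emph{Case $srcA=tgtA$.} I would read the definition as applying both updates to the same account, or equivalently leaving it unchanged. Then $B'(srcA)=B(srcA)-amt+amt=B(srcA)$, so $B'=B$ and the equality is trivial. If the first clause were instead taken to take precedence, conservation would fail. The proof must therefore either state the convention explicitly or assume $srcA\neq tgtA$ as a precondition; I would do the latter and remark on it.

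There is a second subtlety. The codomain is $\mathbb{N}$, so $B(srcA)-amt$ must be a natural number. This holds under the transfer's guard $B(srcA)-amt\ge 0$. The arithmetic itself is really done in $\mathbb{Z}$, and cancellation there is valid regardless of sign, consistent with the remark that the proof holds either way. The main obstacle is therefore not the algebra but stating these well-definedness conditions, $srcA\neq tgtA$ and the guard, precisely enough that the piecewise $B'$ is a function into $\mathbb{N}$. After that the computation is a one-line rearrangement of a finite sum.
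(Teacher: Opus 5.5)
Your proof is correct and follows the same route as the paper: split off the source and target accounts, use that every other balance is unchanged, and cancel $-amt+amt$ in the finite sum. The paper likewise notes that the guard $B(srcA)\ge amt$ plays no role in the algebra.

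Your separate treatment of $srcA=tgtA$ is a genuine sharpening. The paper's split into $\sum_{a\neq srcA,tgtA}B(a)$ plus two separate terms silently assumes the accounts are distinct. You also do not strictly need the extra precondition: read literally as simultaneous constraints, the piecewise definition admits no $B'$ at all when $srcA=tgtA$ and $amt>0$, so the lemma holds vacuously there.
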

\begin{proof}
Summing $B'$ over $AC$ and isolating the two affected accounts,
\begin{align*}
\sum_{a\in AC}B'(a)
&=\sum_{\substack{a\in AC\\ a\neq srcA,tgtA}} B(a)
 +\bigl(B(srcA)-amt\bigr)+\bigl(B(tgtA)+amt\bigr)\\
&=\sum_{a\in AC}B(a)-amt+amt
 =\sum_{a\in AC}B(a).
\end{align*}
The proof does not rely on $B(srcA)\ge amt$: a rejected transfer leaves $B'=B$, so
the lemma holds for aborted transfers as well. Fees or interest would require a
modified statement.
\end{proof}

\paragraph{Runtime assertion.}
The proved equality is turned into a live oracle: the simulation computes
$S_{\text{before}}=\sum_{ac}B(ac)$ and $S_{\text{after}}$ around each transfer and
asserts $S_{\text{before}}=S_{\text{after}}$. Figure~\ref{fig:ledger} shows the
conserved total for a representative transfer.

\begin{figure}[t]
\centering
\begin{tikzpicture}[font=\small]
\node[anchor=west,font=\bfseries\footnotesize,gray] at (0,3.05) {BEFORE};
\node at (-0.15,2.55) {A}; \fill[teal!70!black] (0.3,2.4) rectangle (3.3,2.72);
\node[anchor=west] at (3.4,2.56) {\$100};
\node at (-0.15,2.0) {B}; \fill[teal!55!black] (0.3,1.85) rectangle (0.9,2.17);
\node[anchor=west] at (1.0,2.01) {\$20};
\node[anchor=west,font=\bfseries] at (0,1.5) {$\Sigma=\$120$};
\node[anchor=west,font=\bfseries\footnotesize,gray] at (6,3.05) {AFTER (transfer \$40, A$\to$B)};
\node at (5.85,2.55) {A}; \fill[teal!70!black] (6.3,2.4) rectangle (8.1,2.72);
\node[anchor=west] at (8.2,2.56) {\$60};
\node at (5.85,2.0) {B}; \fill[teal!55!black] (6.3,1.85) rectangle (8.1,2.17);
\node[anchor=west] at (8.2,2.01) {\$60};
\node[anchor=west,font=\bfseries] at (6,1.5) {$\Sigma=\$120$};
\draw[->,thick,gray] (4.4,2.2)--(5.7,2.2);
\end{tikzpicture}
\caption{The conserved total is unchanged by a transfer (illustrative two-account
slice). The invariant $\sum_{ac}B(ac)$ is asserted before and after every transfer.}
\label{fig:ledger}
\end{figure}

\section{Case Study B: Scheduled Transfers}
\label{sec:scheduled}
We extend the specification with time and recurrence, developing the additional
transfer types that requirement 3.2.1.7~\cite{instantpay-srs} calls for beyond the
one-time case. We keep the stage-based simulation loop of Case~Study~A and attach a
clock to it: a current time $t\in\mathbb{N}$, starting at $0$ and advancing by one
unit at the start of each stage, so that stage number and clock value coincide. In
addition to the one-time transfer, a \emph{planned transfer}
$p=(u_i,ac_i,u_j,ac_j,f,t')$ with $t'>t$ is placed in a planned-transfer set $P$;
when the clock reaches $t'$ (that is, at the stage with time $t'$), the
corresponding one-time transfer is executed and
$p$ is removed from $P$. The same stage also admits \emph{recurring} transfers---a
transfer repeated every $k$ time units---which require no new machinery: a periodic
transfer is simply a sequence of ordinary one-time transfers, so the conservation
lemma applies to each occurrence unchanged. Figure~\ref{fig:timeline} shows the
lifecycle.

\begin{figure}[t]
\centering
\begin{tikzpicture}[font=\small]
\draw[gray,thick] (0.6,0)--(7.4,0);
\foreach \x/\n in {1/1,2.9/2,4.8/3,6.7/4}{
  \fill[gray!40] (\x,0) circle (5pt);
  \node[below=5pt] at (\x,0) {\ttfamily t=\n};}
\fill[orange!85!black] (1,0) circle (5pt);
\fill[teal!75!black] (4.8,0) circle (5pt);
\node[above=6pt,orange!70!black,font=\bfseries\footnotesize] at (1,0.15) {schedule $p$};
\node[above=20pt,font=\scriptsize] at (1,0.15) {\ttfamily due t=3};
\node[above=6pt,teal!60!black,font=\bfseries\footnotesize] at (4.8,0.15) {execute $p$};
\draw[orange!80!black,dashed,thick] (1,-0.75)--(4.8,-0.75);
\node[below,font=\scriptsize\itshape,orange!70!black] at (2.9,-0.75) {$p$ waits in $P$};
\end{tikzpicture}
\caption{Scheduled-transfer lifecycle: $p$ is scheduled at $t{=}1$ for $t{=}3$,
persists in $P$, and executes when due.}
\label{fig:timeline}
\end{figure}

\paragraph{The invariant does not change.}
The generated program grows to 265 lines
(\href{https://colab.research.google.com/drive/1HOt6whJp9vp1MiJTyrfYD0FyLORHBRNm#scrollTo=U-7j-fy4iwil&line=265&uniqifier=1}{generated code}, about $2.5\times$ the one-time
version), yet Lemma~\ref{lem:money} still applies: a planned transfer, when
executed, is an ordinary one-time transfer, and scheduling merely modifies $P$
without touching balances. The conservation invariant and its runtime assertion
carry over unchanged---the operation grew, the guarantee did not. Section~\ref{sec:z}
strengthens this into a proof over all reachable configurations.

\section{A Runtime Coverage Model for the Invariant}
\label{sec:coverage}
Asserting the invariant is necessary but not sufficient: an invariant that is never
\emph{observed} to fail may simply never have been exercised on the paths that
could break it. We therefore distinguish \emph{never violated} from
\emph{covered}: coverage requires that all semantically relevant execution paths
affecting the invariant are exercised and checked. We did not design the coverage
criterion ourselves. Prompted only with the question of what should be checked to
know the invariant is covered at runtime, the AI proposed the model below; a
subsequent AI-generated implementation then instrumented these obligations and
reported the coverage attained. Both prompts---the one that elicited the model and
the one that generated the checking code---appear in full in
Appendix~\ref{app:original}; the first was, in essence:
\begin{promptexcerpt}\small
Given the invariant [$\sum_{ac\in AC}B(ac)$ is constant], what should we check to
determine that the invariant was covered at runtime?
\end{promptexcerpt}
\noindent The obligations the AI returned were:
\begin{enumerate}[leftmargin=1.8em,itemsep=2pt,topsep=3pt]
  \item \textbf{Pre/post checking.} For every executed transfer (one-time or
  planned), compute the total $S_{\text{before}}=\sum_{ac\in AC}B(ac)$ immediately
  before the transfer and $S_{\text{after}}=\sum_{ac\in AC}B(ac)$ immediately after,
  and assert $S_{\text{before}}=S_{\text{after}}$.
  \item \textbf{Both semantic cases.} Exercise the success case
  ($ac_s-f\ge 0$) and the failure case ($ac_s-f<0$) at least once for each
  transfer type.
  \item \textbf{User and account diversity.} Observe transfers across different
  users and distinct accounts, exercising ownership disjointness.
  \item \textbf{Planned-transfer lifecycle.} Observe insertion into $P$ (no balance
  change), persistence across time steps, and execution when due, asserting
  $S_{\text{before}}=S_{\text{after}}$ immediately before and after that execution.
  \item \textbf{Multi-transfer stages.} Include at least one stage with several
  transfers, checking the invariant after each.
\end{enumerate}
The invariant is \emph{covered} when all five obligations are met during execution.

\paragraph{Reviewing the coverage model.}
We do not take the AI's proposal on trust. As with the specification and the
AI-proposed invariants, \emph{reviewing} the proposed coverage model is itself part
of the discipline. On review we judged the model adequate: the five obligations
together exercise every path that could break conservation---both transfer outcomes
(success and failure), ownership diversity across users and accounts, the full
planned-transfer lifecycle, and stages containing several transfers---so no
invariant-relevant behaviour is left unchecked. The AI-generated implementation then
instrumented these obligations and reported $100\%$ coverage.

This coverage model---AI-proposed, human-reviewed, then checked by AI-generated
code---is one of two methodological additions this paper makes beyond the
predecessor's runtime-assertion approach; the other is the Z interface layer of
Section~\ref{sec:z}, which the predecessor only touched on. The generated
implementation that performs the invariant check and attains full coverage is
available online
(\href{https://colab.research.google.com/drive/1eB4d7n5-At23hP1QXLmpwKHiSY-PwWTL#scrollTo=3hMqSpJeS4eO&line=208&uniqifier=1}{generated code}).

\section{Z Formalization of the Interfaces}
\label{sec:z}
We fix the interface layer in Z. The schemas in this section were \emph{generated
by the AI}: prompted to rewrite the natural\_math specification so that the
fund-transfer and scheduled-transfer interfaces are made explicit in Z schema
notation (leaving the rest of the specification untouched), the AI produced the
state and operation schemas below; we review and lightly edit them, exactly as we
review the specification, the proposed invariants, and the coverage model. In this
study they were generated \emph{after} the code---a formalization pass that made the
contracts precise and let us restate the invariant as a property of the
specification itself (Theorem~\ref{thm:inductive}). With hindsight we would generate
them \emph{before} the code, as the method prescribes: fixing the contract first
removes ambiguity earlier and gives the generator a precise target.

\paragraph{Z as an internal representation, not a user-facing notation.}
We do not advocate Z as the notation the user reads or writes. Its role here is
\emph{internal}: the schema mechanism decomposes the specification into named
interfaces with explicit inputs, invariants, and pre/postconditions, which both
sharpens ambiguity detection and---by splitting the problem into independent
pieces---mitigates the attention limits that degrade generation at scale. What the
user sees can remain the lightweight ``natural language $+$ mathematics'' form:
whenever an ambiguity surfaces, or an interface itself needs to be presented, it can
be rendered back into that readable form. Z works behind the scenes; \emph{natural
math} stays the human-facing surface.

Account balances are held in a state schema; the
planned-transfer store is a second piece of state; each operation is a schema with
explicit pre/postconditions. We pair a success schema with a $\Xi$
(no-change) failure schema so that the insufficient-funds case is an explicit,
reviewable contract rather than an afterthought.

\begin{schema}{AccountState}
  balances: AC \fun \nat
\end{schema}

\noindent A fixed ownership function assigns each account to its owning user:
\begin{axdef}
  owner: AC \fun U
\end{axdef}

\begin{schema}{PlannedTransferStore}
  P: \power (U \cross AC \cross U \cross AC \cross \nat_1 \cross \nat)
\end{schema}

\begin{schema}{OneTimeTransfer}
  \Delta AccountState \\
  srcU, dstU: U; \; srcAC, dstAC: AC; \; f: \nat_1
\where
  owner~srcAC = srcU \\
  owner~dstAC = dstU \\
  balances~srcAC \geq f \\
  balances' = balances \oplus \{ srcAC \mapsto balances~srcAC - f,\; dstAC \mapsto balances~dstAC + f \}
\end{schema}

\begin{schema}{OneTimeTransferFail}
  \Xi AccountState \\
  srcU, dstU: U; \; srcAC, dstAC: AC; \; f: \nat_1
\where
  owner~srcAC = srcU \\
  owner~dstAC = dstU \\
  balances~srcAC < f
\end{schema}

\begin{schema}{SchedulePlannedTransfer}
  \Delta PlannedTransferStore \\
  srcU, dstU: U; \; srcAC, dstAC: AC; \; f: \nat_1; \; t?: \nat
\where
  P' = P \cup \{ (srcU, srcAC, dstU, dstAC, f, t?) \}
\end{schema}

\noindent A further schema \textit{ExecutePlannedTransfers}$(t)$ applies every
planned transfer in $P$ due at time $t$ (each an ordinary balance update) and
removes it from $P$; we omit its display for brevity.

\paragraph{An omission caught on review.}
The AI-generated schemas initially declared the users $srcU,dstU$ but never
constrained them: nothing tied $srcAC$ to $srcU$, so the contract permitted
debiting an account that its stated user does not own. The natural\_math
specification requires the source account to be owned by the source user (and
likewise for the destination); we restored this as the preconditions
$owner~srcAC = srcU$ and $owner~dstAC = dstU$ above. The omission does not affect
conservation---money is conserved regardless of ownership, which is why the proof
never exposed it---but it does matter for a faithful interface contract. It is
exactly the kind of gap that reviewing the AI's output, rather than trusting it, is
meant to catch.

\paragraph{The invariant, over all reachable configurations.}
Let $I$ be the inductive set of configurations obtained from any legal initial
configuration by finitely many applications of the four operation schemas. The
following claim is strictly stronger than Lemma~\ref{lem:money}: it is a property
of the specification itself, independent of any particular simulation or execution
order.

\begin{theorem}[Conservation over $I$]
\label{thm:inductive}
For every configuration in $I$, $\sum_{ac\in AC}balances(ac)$ is constant.
\end{theorem}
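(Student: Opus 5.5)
The plan is structural induction over the inductive definition of $I$. First, we make the claim precise: for every configuration $c\in I$ reached from a legal initial configuration $c_0$, we have $\Sigma(c)=\Sigma(c_0)$, where $\Sigma(c)=\sum_{ac\in AC}balances_c(ac)$. The sum is well defined because $AC$ is finite. ``Constant'' then means constant along every derivation, and hence equal to the initial total.

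\emph{Base case.} The configuration $c_0$ is reached by zero operations, so $\Sigma(c_0)=\Sigma(c_0)$ trivially.

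\emph{Inductive step.} Suppose $c'$ is obtained from $c\in I$ by one application of one of the four operation schemas, and assume $\Sigma(c)=\Sigma(c_0)$. We show $\Sigma(c')=\Sigma(c)$ by case analysis on the schema.
\begin{itemize}
  \item \textit{OneTimeTransferFail} includes $\Xi AccountState$, so $balances'=balances$ and the total is unchanged.
  \item \textit{SchedulePlannedTransfer} has $\Delta PlannedTransferStore$ and does not mention $AccountState$. By the Z framing convention, $balances$ is unchanged, and again the total is unchanged.
  \item \textit{OneTimeTransfer} sets $balances'=balances\oplus\{srcAC\mapsto balances~srcAC-f,\ dstAC\mapsto balances~dstAC+f\}$. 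This is exactly the update $B'$ of Lemma~\ref{lem:money} with $amt=f$, $srcA=srcAC$ and $tgtA=dstAC$, so the lemma gives $\Sigma(c')=\Sigma(c)$.
\end{itemize}
The precondition $balances~srcAC\ge f$ guarantees that $balances~srcAC-f\in\nat$, so $balances'$ is well typed. Lemma~\ref{lem:money} itself does not need this fact.

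The step I expect to be the main obstacle is the degenerate case $srcAC=dstAC$. Here the override $\oplus$ with two maplets at the same key is ill-defined, or at best ambiguous. Lemma~\ref{lem:money} also tacitly assumes $srcA\neq tgtA$ when it isolates the two accounts. My first choice is to read a self-transfer as the identity: the net change is $-f+f=0$, and the two maplets agree only if $f=0$, which $f:\nat_1$ excludes. The cleanest route is therefore to add the side condition $srcAC\neq dstAC$, or to treat self-transfers as $\Xi$. Under either reading the total is preserved, and I would flag this as a specification ambiguity.

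The remaining case is \textit{ExecutePlannedTransfers}$(t)$. It removes the due transfers from $P$, which leaves balances untouched, and applies the due transfers one after another. I would model it as a finite sequence of \textit{OneTimeTransfer}/\textit{OneTimeTransferFail} steps taken in some fixed order, and prove by an inner induction on the number of due transfers that each step preserves $\Sigma$. Because each step preserves $\Sigma$ whatever state it starts from, the result does not depend on the order in which the due transfers are executed. Combining all cases closes the induction, so $\Sigma$ is the same on every configuration in $I$.
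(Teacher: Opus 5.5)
Your proof is correct and follows the paper's proof sketch. Both argue by induction on the construction of $I$, split into cases over the four schemas, and treat \textit{ExecutePlannedTransfers} as a finite composition of sum-preserving balance updates. Your handling of $srcAC=dstAC$ goes beyond the paper, but standard Z semantics needs no added side condition there: $balances \oplus \{a\mapsto x, a\mapsto y\}$ with $x\neq y$ is not a function, so \textit{OneTimeTransfer} has no after-state for a self-transfer and that case holds vacuously.
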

\begin{proof}[Proof sketch]
By induction on the construction of $I$. \emph{Base:} the initial total is fixed.
\emph{Step:} \textit{OneTimeTransfer} changes two balances by $-f$ and $+f$,
preserving the sum; \textit{OneTimeTransferFail} and \textit{SchedulePlannedTransfer}
include $\Xi AccountState$ (or leave balances untouched), preserving the sum; and
\textit{ExecutePlannedTransfers} is a finite composition of balance-preserving
updates. Hence the sum is invariant across $I$.
\end{proof}

\paragraph{Letting the AI choose the interfaces.}
The schemas above were obtained by asking the AI to \emph{translate} the
natural\_math specification into Z. As a stronger probe, we instead gave it only the
natural-language specification and asked it to \emph{decide the interface
decomposition itself}. It performed well on substance---it identified the major
interfaces unprompted (one-time transfer, schedule, execute, and a time-advance
operation), captured the ownership invariant
($owner(ac_1)=owner(ac_2)\Rightarrow ac_1=ac_2$), and expressed pre/postconditions
as explicit contracts---but slipped in ways a human must catch: it interleaved
Z schemas with natural-language prose, leaked a simulation-specific indexing detail
into an interface, and quietly omitted parts of the specification such as the
initialization. The full AI-proposed interfaces, including these omissions, appear
in Appendix~\ref{app:original}. The AI can draft the interface surface; a human
still curates what belongs in the contract versus the simulation.

\section{Results and Observations}
\label{sec:results}
Across both case studies, the code was never inspected line by line as it evolved;
confidence came from specification-level review, the conservation proof, the
runtime coverage model, and the Z contracts. We report the following observations,
which the reader should treat as case-study evidence rather than controlled
measurements:
\begin{itemize}[leftmargin=1.8em,itemsep=2pt,topsep=3pt]
  \item The same conservation invariant---and its proof---carried over unchanged
  from the one-time to the scheduled variant, even as the generated code grew
  substantially. The invariant is universal to the online-banking specification,
  not tied to any particular operation.
  \item The conservation assertion held on every checked stage of every run we
  performed, as confirmed by inspecting the simulation output (viewable by
  following the links to the generated code).
  \item The implementation appeared correct on first generation once the
  specification had been reviewed---``correct'' here meaning that the logged
  simulation output, including the per-stage invariant assertion, showed the
  expected behaviour, again by inspection of the generated-code output rather than
  by reading the code.
  \item On the positive side, the AI contributed correct lower-level invariants the
  author had not stated, and proposed usable Z interfaces from prose.
  \item On the negative side, as specifications grew past a few pages the AI
  sometimes silently dropped tasks or sections---so abstraction, modularity, and
  decomposition still matter.
\end{itemize}

\section{Discussion, Limitations, and Threats to Validity}
\label{sec:threats}
\paragraph{The central gap: specification-level, not code-level, guarantees.}
The proofs (Lemma~\ref{lem:money}, Theorem~\ref{thm:inductive}) and the runtime
assertions establish properties of the \emph{specification} and of \emph{observed
executions}. They do \emph{not} establish that the generated \emph{code} refines
the specification. That refinement step is delegated to the AI and is
\emph{not} formally discharged here---unlike, e.g., mechanically verified
development~\cite{leino2010dafny}. Runtime assertions raise confidence but detect
violations only on exercised paths (hence the coverage model of
Section~\ref{sec:coverage}), and coverage itself is argued informally rather than
measured by an independent tool.

\paragraph{Case study, not controlled experiment.}
This is a single-author case study in one domain. The empirical observations in
Section~\ref{sec:results} are anecdotal and are flagged as such; there is no
control condition, no independent replication, and the same person authored the
specification and judged correctness. In particular, ``correct on first
generation'' rests on inspection of the simulation output by that same author, not
on an independent oracle, so its construct validity is weak until a measurement
protocol is specified.

\paragraph{Tooling and reproducibility.}
Results depend on a specific commercial LLM assistant. The work was carried out
with ChatGPT used as an agent during the first quarter of 2026; the exact
underlying model version is not known to us, as it was not surfaced by the
interface and may have changed over the period. Re-running with a different model or
at a different date may therefore differ.

\paragraph{Generality.}
Conservation invariants are unusually well suited to this discipline. Whether the
approach extends as cleanly to properties without a conservation structure (e.g.,
liveness, security) is open.

\section{Conclusion and Future Work}
We replicated a specification-first, AI-assisted discipline in a new domain and
extended the \emph{method} with two additions---a runtime coverage model and a Z
formalization proved over all reachable configurations---while stress-testing it
against a richer \emph{problem}, scheduled/recurring transfers, which the
online-banking requirement calls for rather than the method. The recurrence of a
single conservation-invariant spine across two very different domains---knowledge
and money---is preliminary evidence that the discipline generalizes. The honest
limit
remains that confidence is established at the specification level; closing the gap
to the code (via refinement checking or verified generation) is the natural next
step, along with controlled evaluation, additional domains, and non-conservation
properties. Two domain extensions we scoped but did not complete also point the
way: a \emph{recurring-solvency} invariant---guaranteeing that an account holds
sufficient funds before a scheduled debit falls due, a liveness/scheduling property
rather than a conservation one---and support for \emph{multiple, heterogeneous
transfer types} within a single specification. Both would test whether the
discipline holds for properties and operation sets richer than the conservation
invariant studied here. Finally, although this study was carried out code-first
with the Z interfaces formalized afterward, our experience suggests authoring the
Z contracts \emph{before} generation---fixing the interface first, then treating
the code as regenerable output beneath it---and we adopt that order as the
recommended practice. We also stress that the Z layer was itself AI-generated and is
intended as an \emph{internal} representation rather than a user-facing notation: its
schema decomposition sharpens ambiguity detection and helps generation scale, while
the human-facing surface stays the lightweight ``natural language $+$ mathematics''
form, into which any surfaced ambiguity or interface can be rendered back for review.
Whether this internal-Z decomposition measurably improves ambiguity detection and
large-scale generation is a further question we leave open. More broadly, we see the
discipline itself---specification review, invariant proof, coverage, and internal-Z
decomposition---as a \emph{pattern} that could be formalized as a reusable
\emph{skill} and then applied at scale: instantiated across different AI agents and
a range of use cases, and evaluated systematically rather than through a single
case study. Establishing such a skill, and measuring how it transfers across agents
and domains, is the direction we consider most promising.

\section*{Reproducibility and Artifacts}
The paper is designed to be reproducible from the artifacts it carries.
Appendix~\ref{app:original} reproduces the original development document, including
the full \emph{Human Prompt} / \emph{AI Reply} transcripts for every step---the
specification, the ambiguity resolutions, the invariant proof, the coverage model,
and the Z interfaces. It also retains the links to the AI-generated code (the
one-time and scheduled simulations and the coverage-checking implementation),
reproduced inline in Sections~\ref{sec:onetime}, \ref{sec:scheduled},
and~\ref{sec:coverage}. Together these let a reader repeat the process: re-issue the
same prompts to reproduce the reported results, or re-issue them---adapted---against
a different domain or a different AI agent to test how the discipline transfers.

\section*{Acknowledgments}
This article was developed from the appendix material (the specifications, prompts,
and AI responses) with the assistance of Anthropic's Claude, used to structure and
draft the manuscript. This is distinct from the development work reported in the
paper, which used ChatGPT as an agent (Section~\ref{sec:threats}). Responsibility
for the correctness of the content lies entirely with the author.

\bibliographystyle{plain}
\bibliography{main}

\appendix
\section{Original Development Document (lightly edited)}
\label{app:original}
This appendix reproduces the original working document behind this paper, retained
for reference. The English has been lightly edited for readability; the technical
content, the full \emph{Human Prompt} / \emph{AI Reply} transcripts, the
alternative ``trails'' from natural\_math to Z, and all external links (including
the generated-code notebooks) are preserved as in the original.

\bigskip
\sloppy\footnotesize
\subsection{The online bank}

We will use the following online banking specification.  See \href{https://www.researchgate.net/profile/Adya-Singh-7/publication/375757633_Software_requirements_specification_Mobile_Banking_Application_Instantpay/links/658abb6f3c472d2e8e905907/Software-requirements-specification-Mobile-Banking-Application-Instantpay.pdf?origin=publication_detail&_tp=eyJjb250ZXh0Ijp7ImZpcnN0UGFnZSI6InB1YmxpY2F0aW9uIiwicGFnZSI6InB1YmxpY2F0aW9uRG93bmxvYWQiLCJwcmV2aW91c1BhZ2UiOiJwdWJsaWNhdGlvbiJ9fQ}{link}.  We consider the fund transfer requirement (3.2.1.7 Functional requirement 1.7).  We start by a simple simulation of one time transfer.  Next, we will add assertions and then further develop the bank transfer to support additional types of transfer as required in 3.2.1.7 .

\subsection{From natural\_math to proof to code}

Here, the specification is written by the human in natural language augmented by mathematical notation and is written in Latex.  In addition, by reviewing the proof the confidence that the specification is clean increases.  The proof is possibly a mechanism of focusing the ambiguity feedback but that remains to be proven. 

\subsubsection{Fund transfer - one time transfer}

\paragraph{Fund transfer - one time transfer}
\begin{humanprompt}
\textbf{Prompt: Generate a python program that implements the following}\\

We are given a set of users $U = \{ u_1, \ldots, u_n\}$.  Each user has a set of accounts that belong to that user.  In other words we have a set of accounts $AC = \{ac_1,\ldots, ac_k\}$ and a function $AC(u): U \longrightarrow P(AC)$.  Thus, for any users $u \in U$, $AC(u) \subseteq AC$ is a subset of the accounts $AC$ that belong to $u$.   

We assume that for any two users $u_i, u_j \in U$ the intersection between $AC(u_i)$ and $AC(u_j)$ is empty.  In addition, $\bigcup_{u \in U} AC(u) = AC$. Thus, an account belongs to one and only one user.

An account $ac_i \in AC$ is a natural number (including zero).  It represents the amount of money that the user that owns the account, i.e., the user $u$ such that $ac_i \in AC(u)$.  

The one time transfer is an interface that accepts a source user, $u_i \in U$, an account $ac_i \in AC(u_i)$, and a transfer amount $f$.  It also accepts a target user $u_j \in U$ and an account $ac_j \in AC(u_j)$.  

The one time transfer preforms the following operation. If $ac_i - f \ge 0$ then the interface updates $ac_i$ to $ac_i-f$ and $ac_j$ to $ac_j+f$.  Otherwise the interface does nothing.  

The simulation of one time transfer executes as follows. 

\begin{enumerate}
    \item Randomly chose inputs for the one time transfer interface.  In other words, randomly chose  a source user, $u_i \in U$, an account $ac_i \in AC(u_i)$, and a transfer amount $f$.  Also randomly chose a target user $u_j \in U$ and an account $ac_j \in AC(u_j)$.  
    \item Perform the transfer operation 
    \item Report the input that was chosen to the transfer operation and the result of the transfer operation.
\end{enumerate}

\textbf{Simulation initialization}\\
Choose the number of users to be three and the number of accounts to be 6.  Have user one own accounts 1 and 2, user two own account 3 and 4 and user 3 own account 5 and 6. Choose the amount of money in each account randomly. 

\end{humanprompt}

\paragraph{Generate code}

Generated using ChatGPT.  See generated code here \href{https://colab.research.google.com/drive/1HOt6whJp9vp1MiJTyrfYD0FyLORHBRNm?usp=sharing}{link}.  Inspecting the execution log of the simulation indicates that the simulation is correct but we would like to increase our confidence level in the implementation by introducing assertions.  The code is $106$ LOC long. 

\paragraph{Assertion - the overall money in the system is constant}

\begin{humanprompt}

\begin{lemma}
\label{assertion}
The amount of money $\sum_{ac_i \in AC} ac_i$ remains constant after each fun transfer. 
\end{lemma}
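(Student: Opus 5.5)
The statement is the raw form of Lemma~\ref{lem:money}, so the plan is the same two-step argument. First I will repair the statement so that it is well-defined. In the prompt's notation an account $ac_i$ doubles as its balance. I will read $\sum_{ac_i\in AC} ac_i$ as $\sum_{a\in AC} B(a)$, with $B:AC\to\mathbb{N}$ the balance function. ``Fun transfer'' I will read as the one-time transfer.

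I will then fix an arbitrary finite $AC$ and an arbitrary $B$, not just the six-account simulation. Let the transfer have source $ac_i$, target $ac_j$ and amount $f$, and let $B'$ be the balance function after it. The proof splits on the guard $B(ac_i)-f\ge 0$.

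\emph{Failure branch.} The interface does nothing, so $B'=B$ and the sums are trivially equal.

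\emph{Success branch.} I first assume $ac_i\neq ac_j$. I split $AC$ into three disjoint pieces: $\{ac_i\}$, $\{ac_j\}$, and the rest. On the rest, $B'=B$ pointwise. So $\sum_a B'(a)$ equals the untouched part plus $(B(ac_i)-f)+(B(ac_j)+f)$. The $-f$ and $+f$ cancel, which gives back $\sum_a B(a)$. This is exactly the computation displayed in the proof of Lemma~\ref{lem:money}, and I will cite it rather than redo it. Note that the cancellation never uses $B(ac_i)\ge f$. The guard matters only for keeping $B'$ valued in $\mathbb{N}$, so that $B'$ is still a legal balance function.

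The step I expect to need real care is the \textbf{self-transfer case $ac_i=ac_j$}. The specification allows it, since $u_i=u_j$ is permitted. Here the two updates ``set $ac_i$ to $ac_i-f$'' and ``set $ac_j$ to $ac_j+f$'' hit the same account, and the result depends on how they are read. Under simultaneous override, as in the $\oplus$ update of \textit{OneTimeTransfer}, the maplet set has two conflicting values for one key. Under sequential update the net change is $-f+f=0$, and the sum is preserved. I will adopt the sequential reading, or equivalently the rule that a self-transfer is a no-op, and record this as a clarification of the specification.

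Finally, the lemma speaks of the sum remaining constant ``after each'' transfer. I will lift the single-step equality to whole runs by a trivial induction on the number of stages. The base case is the initial total. The step is the per-transfer equality just proved. This anticipates the reachable-configuration argument of Theorem~\ref{thm:inductive}.
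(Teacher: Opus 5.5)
Your proof is correct and follows the paper's argument. You read the sum as $\sum_{a\in AC}B(a)$, split on the guard, and in the success case isolate the source and target accounts so that $-f+f$ cancels, with the guard needed only to keep $B'$ in $\mathbb{N}$, exactly as the paper remarks. Your separate handling of the self-transfer $ac_i=ac_j$ goes beyond the paper: its decomposition into $\sum_{ac\in AC\setminus\{ac_s,ac_t\}} ac$ plus $ac_s+ac_t$ silently assumes the two accounts are distinct, and in that case conservation depends on how the two updates are read, so fixing the reading explicitly is the right move.
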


\textbf{Provide a proof that lemma \ref{assertion} is correct}\\

\end{humanprompt}

Using ChatGPT I obtained the following proof of the assertion.  Reviewing it I found the proof correct. 

\begin{aireply}

\subparagraph{Generated proof}

\paragraph{Setup and assumptions.}
Let \(AC=\{ac_1,\dots,ac_k\}\) be the multiset of account balances
(we identify an account by its current numeric balance).
Assume every account balance and every transfer amount is a real
number (or integer) \(\ge 0\). In particular we assume the transfer
amount \(f\) satisfies \(f\ge 0\). The one-time transfer interface
selects a source account \(ac_s\) and a target account \(ac_t\)
(possibly belonging to different users) and attempts to move \(f\)
units from \(ac_s\) to \(ac_t\). The interface updates balances only
when \(ac_s - f \ge 0\); otherwise it leaves all balances unchanged.

\begin{lemma}
\label{assertion}
The total amount of money
\(\displaystyle T=\sum_{ac\in AC} ac\) is invariant under the one-time
transfer operation: after any transfer (successful or aborted)
the new total \(T'\) equals \(T\).
\end{lemma}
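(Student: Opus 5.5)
The plan is a direct case split on the outcome of the transfer, followed by a re-indexing of the sum so that only the two touched accounts appear explicitly. The setting here identifies each account with its current balance, so I will first make this precise the way Lemma~\ref{lem:money} does. I introduce the balance function $B$ on account identifiers and write $T=\sum_{a\in AC}B(a)$. This sidesteps the ill-formedness flagged in the ambiguity review: with a multiset, two accounts holding equal balances could otherwise be conflated. The new total is then $T'=\sum_{a\in AC}B'(a)$, where $B'$ is the post-transfer balance function.

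\emph{Case 1 (aborted transfer).} If $B(ac_s)-f<0$, the interface leaves all balances unchanged. Hence $B'=B$ pointwise, and $T'=T$ immediately.

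\emph{Case 2 (successful transfer, $ac_s\neq ac_t$).} Here $B'(ac_s)=B(ac_s)-f$, $B'(ac_t)=B(ac_t)+f$, and $B'(a)=B(a)$ otherwise. I would split the finite sum over $AC$ into the two singletons $\{ac_s\},\{ac_t\}$ and the rest $AC\setminus\{ac_s,ac_t\}$. This is legitimate because these are disjoint and their union is $AC$. The untouched part agrees term by term, and the two affected terms contribute $(B(ac_s)-f)+(B(ac_t)+f)=B(ac_s)+B(ac_t)$. This is exactly the computation already carried out in Lemma~\ref{lem:money}, which I would simply invoke. Note that the hypothesis $B(ac_s)\ge f$ is used only to keep $B'$ valued in $\mathbb{N}$, not for the equality itself.

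\emph{Case 3 (successful transfer, $ac_s=ac_t$).} This is the step I expect to need care, since the setup allows source and target to coincide (e.g.\ the same user choosing the same account), and then the two-singleton decomposition collapses. I would fix the semantics as sequential: debit, then credit the updated value. That gives $B'(ac_s)=B(ac_s)-f+f=B(ac_s)$, so $B'=B$ and $T'=T$. If instead the update were read as a simultaneous override, as in the $\oplus$ of \textit{OneTimeTransfer}, the later maplet would win. The account would then end with $B(ac_s)+f$ and conservation would \emph{fail}. So the main obstacle is really a specification ambiguity rather than a calculation. I would resolve it by either adding the precondition $ac_s\neq ac_t$ or stating the sequential semantics explicitly, and then conclude $T'=T$ in all cases.
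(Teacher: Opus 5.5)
Your proof is correct, and its core is the paper's own argument. The paper also separates the aborted case ($T'=T$ trivially) from the successful one. It writes $T=\bigl(\sum_{ac\in AC\setminus\{ac_s,ac_t\}}ac\bigr)+ac_s+ac_t$ and cancels $-f+f$. Your switch from a multiset of balances to a balance function $B$ is the cleaner bookkeeping that the main-text Lemma~\ref{lem:money} also uses.

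What you add is Case~3, and it is a real improvement. The paper's displayed decomposition counts $ac_s$ twice when $ac_s=ac_t$. Nothing in the specification forbids choosing the same account as source and target, so the paper's proof silently assumes the two accounts are distinct. Your extra case closes that gap under either resolution you propose: sequential debit-then-credit, or an explicit precondition $ac_s\neq ac_t$. Invoking Lemma~\ref{lem:money} only in Case~2 is also sound, because its proof needs the same distinctness.

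One side remark needs correcting, although your conclusion does not depend on it. In Z a set of maplets is unordered, so there is no ``later maplet wins.'' When $srcAC=dstAC$, the override set contains two pairs for one account, $srcAC\mapsto b-f$ and $srcAC\mapsto b+f$ with $b-f\neq b+f$ (since $f\geq 1$). The override is then not a function, and no $balances'\colon AC\rightarrow\mathbb{N}$ can satisfy the predicate. So \textit{OneTimeTransfer} is simply not enabled on that input, and conservation is not violated at the Z level. The actual defect there is that neither \textit{OneTimeTransfer} nor \textit{OneTimeTransferFail} covers a same-account transfer with sufficient funds. The money-creating behaviour you describe belongs to an implementation that computes both new values from the old balances and then assigns them. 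That implementation hazard is exactly why your sequential-semantics resolution matters.
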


\begin{proof}
Let the accounts before the transfer be \(AC=\{ac_1,\dots,ac_k\}\)
and denote the chosen source and target accounts by \(ac_s\) and
\(ac_t\) respectively. Write the total before the transfer as
\[
T \;=\; \sum_{ac\in AC} ac \;=\; \Big(\sum_{ac\in AC\setminus\{ac_s,ac_t\}} ac\Big)
       + ac_s + ac_t .
\]

We consider two cases.

\textbf{Case 1 — transfer aborts:} If \(ac_s - f < 0\), the interface
does nothing. Hence every account balance remains the same and the
total after the operation is \(T' = T\).

\textbf{Case 2 — transfer succeeds:} If \(ac_s - f \ge 0\), the
interface updates the two chosen accounts to
\[
ac_s' = ac_s - f,\qquad ac_t' = ac_t + f,
\]
and leaves every other account unchanged. The total after the
transfer is
\[
T' \;=\; \Big(\sum_{ac\in AC\setminus\{ac_s,ac_t\}} ac\Big)
       + ac_s' + ac_t'
    \;=\; \Big(\sum_{ac\in AC\setminus\{ac_s,ac_t\}} ac\Big)
       + (ac_s - f) + (ac_t + f).
\]
The \(-f\) and \(+f\) cancel, so
\[
T' \;=\; \Big(\sum_{ac\in AC\setminus\{ac_s,ac_t\}} ac\Big)
       + ac_s + ac_t \;=\; T.
\]

In both cases \(T' = T\). Therefore the total amount of money in the
system is preserved by every one-time transfer operation.
\end{proof}

\end{aireply}

TBC - add recurring 
transfers, and scheduled transfers

TBC - add assertions that the some of money in the system is constant 

TBC - add assertion that recurring transfer should add money to an account before an upcoming recurring reduction in the account. 

\subsubsection{Fund transfer - schedule transfer}

Next, I updated the specification to include a transfer at time $t$ and a recurring transfer every $k$ unit of times.

\begin{humanprompt}
\label{scheudleTransfer}
\textbf{Prompt: Generate a python program that implements the following}\\

We are given a set of users $U = \{ u_1, \ldots, u_n\}$.  Each user has a set of accounts that belong to that user.  In other words we have a set of accounts $AC = \{ac_1,\ldots, ac_k\}$ and a function $AC(u): U \longrightarrow P(AC)$.  Thus, for any users $u \in U$, $AC(u) \subseteq AC$ is a subset of the accounts $AC$ that belong to $u$.   

We assume that for any two users $u_i, u_j \in U$ the intersection between $AC(u_i)$ and $AC(u_j)$ is empty.  In addition, $\bigcup_{u \in U} AC(u) = AC$. Thus, an account belongs to one and only one user.

An account $ac_i \in AC$ is a natural number (including zero).  It represents the amount of money that the user that owns the account, i.e., the user $u$ such that $ac_i \in AC(u)$.  

Fund transfer are a function of the current time. The current time is a natural number initialized to $0$ at the beginning of the simulation.  At each stage of the simulation the current time is first incremented by one. 

\paragraph{Fund transfer operations - schedule transfer}

\textbf{The one time transfer} is an interface that accepts a source user, $u_i \in U$, an account $ac_i \in AC(u_i)$, and a transfer amount $f$.  It also accepts a target user $u_j \in U$ and an account $ac_j \in AC(u_j)$.  

The one time transfer preforms the following operation. If $ac_i - f \ge 0$ then the interface updates $ac_i$ to $ac_i-f$ and $ac_j$ to $ac_j+f$.  Otherwise the interface does nothing.  

\textbf{Planed transfer} is a one time transfer scheduled to occur at time $t$.  See simulation steps below for details 

\subparagraph{Simulation}

The simulation of fund transfer consists of stages. In each stage the following steps are taken. 

\begin{enumerate}
    \item Increment current time, $t$, by $1$. 
    \item \textbf{One time transfer}.  Randomly chose inputs for the one time transfer interface.  In other words, randomly chose  a source user, $u_i \in U$, an account $ac_i \in AC(u_i)$, and a transfer amount $f$.  Also randomly chose a target user $u_j \in U$ and an account $ac_j \in AC(u_j)$.  Thus, a one time transfer is defined by the vector  $(u_i, ac_i, u_j, ac_j, f)$.
    \begin{enumerate}
           \item Perform the transfer operation $(u_i, ac_i, u_j, ac_j, f)$.  
    \end{enumerate}
       
    \item \textbf{Schedule a planned transfer}. Randomly chose inputs for a one time transfer interface.  In other words, randomly chose  a source user, $u_i \in U$, an account $ac_i \in AC(u_i)$, and a transfer amount $f$.  Also randomly chose a target user $u_j \in U$ and an account $ac_j \in AC(u_j)$.  Randomly choose some time $t^{'}$ that is greater than the current time.   Add the vector $p = (u_i, ac_i, u_j, ac_j, f, t^{'})$ in the planned transfer set $P$.
    \item \textbf{Perform a planned transfer}. For any $p = (u_i, ac_i, u_j, ac_j, f, t^{''}) \in P$ such that $t^{''} = t$ the current time perform the one time transfer $(u_i, ac_i, u_j, ac_j, f)$ and remove $p = (u_i, ac_i, u_j, ac_j, f, t^{''})$ from $P$. 
    \item Report the input that was chosen for each of the transfer operations and the result of each transfer operation performed in this stage.  Identify which type of transfer was preformed (planned or one time).  Highlight the current time of each transfer operation performed in this stage, $t$.  Report the state of the planned transfer set $P$ at that stage before the planned transfer operations were performed.  
\end{enumerate}

\textbf{Simulation initialization}\\
Choose the number of users to be three and the number of accounts to be 6.  Have user one own accounts 1 and 2, user two own account 3 and 4 and user 3 own account 5 and 6. Choose the amount of money in each account randomly. Current time is initialized to $0$. Initialize the planned fund transfer set $P$ to the empty set.  An element in $P$ is of type $U \times AC \times U \times AC \times N^{+} \times N$.  $N^{+}$ stand for the natural numbers without zero and $N$ stand for the natural numbers with $0$.

\end{humanprompt}

\paragraph{Generated code}

Using ChatGPT I generated Python code see \href{https://colab.research.google.com/drive/1HOt6whJp9vp1MiJTyrfYD0FyLORHBRNm#scrollTo=U-7j-fy4iwil&line=265&uniqifier=1}{link}.  The code is $265$ LOC and seems to be executing fine based on inspection of the simulation.  Next we also generate the assertions based on the lemma above to increase our confidence in the implementation.  But before we do that let us check that the assertion still holds. 

\paragraph{Assertion - the overall money in the system is constant}

I gave the updated design definition to ChatGPT in a new session and asked for the proof of the lemma below.   The proof is essentially the same.   The only difference is the observation that periodic fund transfer translates to regular transfer - see proof below.

\begin{humanprompt}

\begin{lemma}
\label{assertion}
The amount of money $\sum_{ac_i \in AC} ac_i$ remains constant after each fun transfer. 
\end{lemma}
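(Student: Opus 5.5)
The plan is to prove the statement directly, from the update rule of the one-time transfer interface given in the scheduled-transfer specification, without relying on any earlier conservation result. First I fix the meaning of the statement. As in the ambiguity resolution of Section~\ref{sec:onetime}, the summand $ac_i$ is read as the \emph{balance} of account $ac_i$, written $B(ac_i)$. The claim is then that $T=\sum_{a\in AC}B(a)$ is unchanged by every fund-transfer operation the scheduled specification can perform. There are only two such operations: a one-time transfer $(u_i,ac_i,u_j,ac_j,f)$ executed directly, and a planned transfer $p=(u_i,ac_i,u_j,ac_j,f,t'')$ executed when the clock reaches $t''$. By the specification, the latter \emph{performs the one time transfer} $(u_i,ac_i,u_j,ac_j,f)$. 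So the core of the proof is a single computation for one execution of the one-time interface, and I then check that every other step of a stage is balance-neutral.

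\textbf{Core computation.} Let $s=ac_i$ be the source and $d=ac_j$ the target, and split on the guard. If $B(s)-f<0$, the interface does nothing, so $B'=B$ and $T'=T$. If $B(s)-f\ge 0$, I distinguish $s\neq d$ from $s=d$. This distinction is the step I expect to need the most care, because the informal specification (and the AI's earlier proof, which identified accounts with their numeric balances as a multiset) silently assumes two distinct accounts.

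For $s\neq d$, write
\[
T'=\sum_{a\in AC\setminus\{s,d\}}B(a)+\bigl(B(s)-f\bigr)+\bigl(B(d)+f\bigr),
\]
and cancel $-f+f$ to obtain $T'=T$. For $s=d$, I read the update as the sequential assignment ``subtract $f$, then add $f$''. The single affected balance then returns to $B(s)$, so again $T'=T$. I will state explicitly that the result is independent of the reading: if an override reading ($\oplus$ in Z) were used instead, the case $s=d$ would need to be excluded as a precondition, and I will flag this as an ambiguity rather than hide it.

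\textbf{Remaining operations in a stage.} The clock increment touches only $t$. Scheduling a planned transfer only inserts a tuple into $P$. Removing an executed $p$ from $P$ touches only $P$. None of these alters $B$, so each preserves $T$ trivially.

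Finally, a recurring transfer every $k$ time units is, by the observation already made in the specification, a sequence of ordinary one-time executions. Conservation for it therefore follows by applying the core computation once per occurrence and chaining the equalities, which is a finite induction on the number of occurrences. The same chaining handles a stage in which several planned transfers fall due at the same time $t$.
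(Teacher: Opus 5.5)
Your proof is correct and follows the same route as the paper's: split on the guard, cancel $-f+f$ in the success case, reduce an executed planned transfer to a one-time transfer, note that the clock increment and insertion into or removal from $P$ leave balances untouched, and chain the equalities across a stage. Your separate case $s=d$ is a genuine refinement the paper skips, since its decomposition $\sum_{a\in AC\setminus\{s,d\}}B(a)+B(s)+B(d)$ double-counts $B(s)$ when the accounts coincide; however, ``independent of the reading'' overstates slightly, because under the override (or overlapping piecewise) reading the update is ill-defined at $s=d$, and conservation holds there only once that case is excluded, as you go on to say.
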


\textbf{Provide a proof that lemma \ref{assertion} is correct}\\

\end{humanprompt}

\subparagraph{Generated proof}

\begin{aireply}
\begin{lemma}
The amount of money $\sum_{ac_i \in AC} ac_i$ remains constant after each fund transfer.
\end{lemma}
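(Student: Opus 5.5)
The plan is to reduce every balance-changing event in the scheduled-transfer simulation to the one-time transfer, for which conservation is already established by Lemma~\ref{lem:money}. Read $\sum_{ac_i\in AC}ac_i$ as $\sum_{ac\in AC}B(ac)$, as fixed in Case Study~A, and let $T_s$ denote the total after stage $s$. The aim is to show $T_s=T_{s-1}$ for every stage. A simple induction on $s$ then gives $T_s=T_0$.

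\textbf{Step 1: classify the steps of a stage.} Each stage consists of the following steps:
\begin{enumerate}
\item incrementing the clock $t$;
\item one random one-time transfer $(u_i,ac_i,u_j,ac_j,f)$;
\item scheduling a planned transfer $p$, that is, inserting $p$ into $P$;
\item executing every $p\in P$ whose due time equals $t$, and removing it from $P$;
\item reporting.
\end{enumerate}
Steps (1), (3), (5) and the removal in (4) touch only $t$ and $P$, never $B$, so they trivially preserve the sum. Step (2) is exactly the situation of Lemma~\ref{lem:money}. In the rejected case $B'=B$, and in the successful case the $-f$ and $+f$ cancel.

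\textbf{Step 2: planned and recurring transfers.} The due set $D_t=\{p\in P: t''=t\}$ is finite, since at most one element is inserted per stage. Its transfers are executed one after another in some order, and each execution is an ordinary one-time transfer applied to the current balances. I then induct on $|D_t|$, using Lemma~\ref{lem:money} at each step, to show the total is unchanged after the whole batch. The order of execution is irrelevant, because each individual update preserves the sum. A recurring transfer every $k$ units is handled the same way: I unfold it into the sequence of planned one-time transfers at times $t_0, t_0+k, \ldots$, so each occurrence is again covered by the lemma.

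\textbf{Main obstacle.} The delicate point is that a planned transfer is checked against balances at execution time, not at scheduling time. It may therefore fail, or it may interact with other transfers due in the same stage. I will handle this by observing that Lemma~\ref{lem:money} never uses the precondition $B(srcA)\ge amt$. Both the success branch and the abort branch preserve the sum for arbitrary current balances, so no reasoning about solvency or interleaving is required.

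A secondary subtlety is the degenerate case $srcA=tgtA$, where the ``isolate two accounts'' computation double-counts. In that case $B'=B$ directly, so I will dispatch it separately.

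Composing Steps 1 and 2 gives $T_s=T_{s-1}$, and the induction over stages yields the claim.
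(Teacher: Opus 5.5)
Your proposal is correct and follows essentially the same route as the paper's proof. Both arguments rest on the same four facts: both branches of a one-time transfer preserve the sum, an executed planned transfer is just a one-time transfer, scheduling and advancing the clock leave balances untouched, and the several transfers of a stage compose. Your explicit inductions and your separate treatment of $srcA=tgtA$ are extra care the paper omits, since its isolate-two-accounts computation silently assumes distinct accounts. Note, however, that $B'=B$ in that degenerate case follows only from the sequential debit-then-credit reading of the natural-language operation, not from the piecewise definition of $B'$ in Lemma~\ref{lem:money}, whose first two clauses conflict when the accounts coincide.
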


\begin{proof}
Let $S = \sum_{ac \in AC} ac$ denote the total amount of money before a transfer.

Consider a one-time transfer with input $(u_s, ac_s, u_t, ac_t, f)$, where $ac_s$ is the source account and $ac_t$ the target account.

\textbf{Case 1:} $ac_s - f \ge 0$.  
The transfer is executed, yielding
\[
ac_s' = ac_s - f, \qquad ac_t' = ac_t + f.
\]
All other accounts remain unchanged. The new total is
\[
S' = \sum_{ac \in AC} ac' 
= \Bigl(\sum_{ac \in AC \setminus \{ac_s, ac_t\}} ac \Bigr) + (ac_s - f) + (ac_t + f).
\]
Rearranging gives
\[
S' = \sum_{ac \in AC} ac = S.
\]

\textbf{Case 2:} $ac_s - f < 0$.  
No transfer occurs, so no balances change and $S' = S$.

Thus every one-time transfer preserves the total amount of money.

A planned transfer is executed exactly as a one-time transfer at its scheduled time, so it also preserves the total. Adding a planned transfer to the set $P$ or advancing the simulation time does not modify any account values.

Since every transfer preserves the total, and a simulation stage may contain several such transfers, the total also remains unchanged after any stage.

Therefore,
\[
\sum_{ac \in AC} ac \text{ is invariant under the simulation}.
\]
\end{proof}
\end{aireply}

\subsection{Invariant coverage}

Next, in order to increase the confidence level that the implementation is correct I'll attempt to define a coverage model of the invariant \ref{assertion} and obtain runtime feedback of the level of coverage the invariant has at simulation time. 

\subsubsection{Obtain a coverage model for the invariant}
\begin{humanprompt}
Given the invariant \ref{assertion} what should we check to determine that the invariant was coveraged at runtime? 
\end{humanprompt}

Here is the reply I got from ChatGPT.  

\begin{aireply}

\paragraph*{Runtime Coverage Model for the Invariant}

\paragraph{Invariant.}
\[
\sum_{ac \in AC} ac = \text{constant}
\]

The invariant states that the total amount of money across all accounts must remain unchanged after every fund transfer operation.

\subparagraph*{Meaning of Coverage}

Invariant coverage at runtime means more than observing that the invariant is never violated. It requires that all semantically relevant execution paths that could affect the invariant are exercised and checked during execution.

\subparagraph*{Coverage Obligations}

The invariant is considered covered if all of the following runtime conditions are satisfied.

\paragraph{1. Pre- and Post-Transfer Checking}
For every executed transfer operation (one-time or planned), the system computes
\[
S_{\text{before}} = \sum_{ac \in AC} ac
\quad \text{and} \quad
S_{\text{after}} = \sum_{ac \in AC} ac
\]
and verifies that
\[
S_{\text{before}} = S_{\text{after}}.
\]

This check must be performed for both one-time transfers and planned transfers.

\paragraph{2. Semantic Case Coverage of Transfer Execution}

Each transfer operation has two semantic cases, both of which must be exercised.

\begin{itemize}
\item \textbf{Successful transfer:} $ac_s - f \ge 0$, where balances are updated.
\item \textbf{Failed transfer:} $ac_s - f < 0$, where no balances change.
\end{itemize}

Runtime coverage requires that both cases occur at least once for one-time transfers and at least once for planned transfers.

\paragraph{3. User and Account Diversity}

Transfers must be observed across different users and distinct accounts, including cases where the source user and target user are different. This ensures that the disjointness of account ownership is exercised at runtime.

\paragraph{4. Planned Transfer Lifecycle Coverage}

Planned transfers must be observed through all phases of their lifecycle:
\begin{enumerate}
\item Insertion into the planned transfer set $P$, without modifying account balances.
\item Persistence in $P$ across multiple time steps.
\item Execution at the scheduled time, with invariant checking before and after execution.
\end{enumerate}

\paragraph{5. Multi-Transfer Stage Coverage}

At least one simulation stage must include multiple transfer operations (e.g., a one-time transfer and one or more planned transfers). The invariant must be checked and preserved after each individual transfer in the stage.

\subparagraph*{Coverage Criterion}

The invariant $\sum_{ac \in AC} ac$ is considered covered at runtime if all of the above conditions are satisfied during the execution of the simulation.

\end{aireply}

\subsubsection{Code generation}

\begin{humanprompt}
Generate the python code that is specified above.  Inlcude a check of the invariance \ref{assertion} after each stage of the simulation. Check the coverage detailed above and report the increase in coverage after each stage and what new condition was covered. At the end of the simulation report the percentage of coverage that was achieved. 
\end{humanprompt}

Generated code executed nicely and achieved $100\%$ coverage.  See code here \href{https://colab.research.google.com/drive/1eB4d7n5-At23hP1QXLmpwKHiSY-PwWTL#scrollTo=3hMqSpJeS4eO&line=208&uniqifier=1}{link}.

\subsection{From natural to natural\_math to proof to code}

Can the business rule be specified in natural language and translated to the natural\_math intermediate representation and use the claim and proof flow to increase the confidence that the specification is clean?

\subsubsection{Fund transfer - one time transfer - using a natural to natural\_math to proof to code flow}

First we create a natural language description of the specification and verify that it describes the one time transfer requirement in natural language with minimal or no use of mathematical notation. 

\begin{humanprompt}
\textbf{Prompt: Generate a natural language description of the following specification keeping mathematical notation to a minium.  Generate the specification as a latex section.}\\

We are given a set of users $U = \{ u_1, \ldots, u_n\}$.  Each user has a set of accounts that belong to that user.  In other words we have a set of accounts $AC = \{ac_1,\ldots, ac_k\}$ and a function $AC(u): U \longrightarrow P(AC)$.  Thus, for any users $u \in U$, $AC(u) \subseteq AC$ is a subset of the accounts $AC$ that belong to $u$.   

We assume that for any two users $u_i, u_j \in U$ the intersection between $AC(u_i)$ and $AC(u_j)$ is empty.  In addition, $\bigcup_{u \in U} AC(u) = AC$. Thus, an account belongs to one and only one user.

An account $ac_i \in AC$ is a natural number (including zero).  It represents the amount of money that the user that owns the account, i.e., the user $u$ such that $ac_i \in AC(u)$.  

The one time transfer is an interface that accepts a source user, $u_i \in U$, an account $ac_i \in AC(u_i)$, and a transfer amount $f$.  It also accepts a target user $u_j \in U$ and an account $ac_j \in AC(u_j)$.  

The one time transfer preforms the following operation. If $ac_i - f \ge 0$ then the interface updates $ac_i$ to $ac_i-f$ and $ac_j$ to $ac_j+f$.  Otherwise the interface does nothing.  

The simulation of one time transfer executes as follows. 

\begin{enumerate}
    \item Randomly chose inputs for the one time transfer interface.  In other words, randomly chose  a source user, $u_i \in U$, an account $ac_i \in AC(u_i)$, and a transfer amount $f$.  Also randomly chose a target user $u_j \in U$ and an account $ac_j \in AC(u_j)$.  
    \item Perform the transfer operation 
    \item Report the input that was chosen to the transfer operation and the result of the transfer operation.
\end{enumerate}

\textbf{Simulation initialization}\\
Choose the number of users to be three and the number of accounts to be 6.  Have user one own accounts 1 and 2, user two own account 3 and 4 and user 3 own account 5 and 6. Choose the amount of money in each account randomly. 

\end{humanprompt}

\begin{aireply}

This section provides an intuitive description of the system with minimal mathematical notation.

We consider a collection of users, each of whom owns a personal set of accounts. Every account belongs to exactly one user, and no account is shared among different users. Taken together, all accounts in the system are distributed across all users without overlap.

Each account stores a non-negative whole number representing the amount of money currently held in that account.

The system provides a \emph{one-time transfer} operation. This operation takes as input:
\begin{itemize}
    \item a source user and one of their accounts,
    \item a transfer amount,
    \item a target user and one of their accounts.
\end{itemize}

The operation checks whether the source account contains enough money to cover the transfer amount. If it does, the specified amount is deducted from the source account and added to the target account. If the source account does not have sufficient funds, the system makes no changes.

A simulation of this one-time transfer proceeds in three steps:
\begin{enumerate}
    \item The system randomly selects a source user, one of that user's accounts, and a transfer amount. It also randomly selects a target user and one of that user's accounts.
    \item The transfer operation is executed using these randomly chosen inputs.
    \item The system outputs both the chosen inputs and the result of the transfer attempt.
\end{enumerate}

\paragraph*{Simulation Initialization}
For the simulation, we set up three users and six accounts. User~1 owns accounts~1 and~2, User~2 owns accounts~3 and~4, and User~3 owns accounts~5 and~6. Each account is assigned an initial amount of money chosen at random.

\end{aireply}

I verified the natural language above and it captures the intent for a single transfer.  Next, I attempt to generate the natural\_math specification of the single transfer specification so that we can then obtain a proof of its correctness.  This was done in a fresh chat.  

\begin{humanprompt}

Generate a mathematical oriented precise description using latex notation of the following specification so that ambiguities are removed. 

This section provides an intuitive description of the system with minimal mathematical notation.

We consider a collection of users, each of whom owns a personal set of accounts. Every account belongs to exactly one user, and no account is shared among different users. Taken together, all accounts in the system are distributed across all users without overlap.

Each account stores a non-negative whole number representing the amount of money currently held in that account.

The system provides a \emph{one-time transfer} operation. This operation takes as input:
\begin{itemize}
    \item a source user and one of their accounts,
    \item a transfer amount,
    \item a target user and one of their accounts.
\end{itemize}

The operation checks whether the source account contains enough money to cover the transfer amount. If it does, the specified amount is deducted from the source account and added to the target account. If the source account does not have sufficient funds, the system makes no changes.

A simulation of this one-time transfer proceeds in three steps:
\begin{enumerate}
    \item The system randomly selects a source user, one of that user's accounts, and a transfer amount. It also randomly selects a target user and one of that user's accounts.
    \item The transfer operation is executed using these randomly chosen inputs.
    \item The system outputs both the chosen inputs and the result of the transfer attempt.
\end{enumerate}

\paragraph*{Simulation Initialization}
For the simulation, we set up three users and six accounts. User~1 owns accounts~1 and~2, User~2 owns accounts~3 and~4, and User~3 owns accounts~5 and~6. Each account is assigned an initial amount of money chosen at random.

\end{humanprompt}

The AI reply below is interesting.  Specifically, the natural\_math description below is \textbf{different} than the one I original wrote. 

\begin{aireply}

\paragraph{System Entities.}
Let 
\[
U = \{u_1, u_2, u_3\}
\]
be the set of users, and let
\[
AC = \{ac_1,ac_2,ac_3,ac_4,ac_5,ac_6\}
\]
be the set of accounts.  
Each account belongs to exactly one user. We formalise ownership through a function
\[
\owner: AC \to U.
\]
For the simulation initialization we define:
\[
\owner(ac_1)=\owner(ac_2)=u_1,\qquad
\owner(ac_3)=\owner(ac_4)=u_2,\qquad
\owner(ac_5)=\owner(ac_6)=u_3.
\]

Each account stores a non-negative integer balance.  
Define the balance function
\[
\bal : AC \to \mathbb{N}_0.
\]

\paragraph{One-Time Transfer Operation.}
A transfer input is a tuple
\[
(s, a_s, m, t, a_t)
\]
where
\[
s, t \in U, \qquad a_s, a_t \in AC, \qquad \owner(a_s)=s,\ \owner(a_t)=t, \qquad m \in \mathbb{N}_0.
\]

The transfer operation is the partial function
\[
T : U \times AC \times \mathbb{N}_0 \times U \times AC 
   \longrightarrow AC \to \mathbb{N}_0,
\]
which transforms the balance function as follows.  
Given state $\bal$ and input $(s,a_s,m,t,a_t)$,
\[
T(\bal; s,a_s,m,t,a_t) = 
\begin{cases}
\bal' & \text{if } \bal(a_s) \ge m,\\[2mm]
\bal & \text{otherwise},
\end{cases}
\]
where $\bal'$ is defined pointwise by:
\[
\bal'(x) =
\begin{cases}
\bal(x) - m & \text{if } x = a_s,\\[1mm]
\bal(x) + m & \text{if } x = a_t,\\[1mm]
\bal(x)     & \text{otherwise}.
\end{cases}
\]

Thus, the transfer succeeds iff the source account contains at least $m$ units of money.

\paragraph{Randomised Simulation Step.}
A single simulation step consists of selecting inputs uniformly at random and executing the transfer. Formally, the system samples:
\[
s \sim U,\qquad 
a_s \sim \{\,a \in AC : \owner(a)=s\,\},
\]
\[
t \sim U,\qquad 
a_t \sim \{\,a \in AC : \owner(a)=t\,\},
\]
and a transfer amount 
\[
m \sim M,
\]
where $M \subseteq \mathbb{N}_0$ is a predefined finite set of admissible transfer amounts.

Given the sampled inputs, the system computes
\[
\bal_{\mathrm{new}} := T(\bal; s,a_s,m,t,a_t).
\]

The output of the simulation step is the tuple
\[
(s, a_s, m, t, a_t, \bal_{\mathrm{new}}).
\]

\paragraph{Simulation Initialization.}
Initially, each account $ac_i$ is assigned a balance
\[
\bal(ac_i) \sim D,
\]
where $D$ is a specified distribution on $\mathbb{N}_0$ (e.g., uniform over a fixed finite range).

Thus the initial state is completely determined by the random vector
\[
(\bal(ac_1), \ldots, \bal(ac_6)).
\]

\end{aireply}

There are several problems with the generation above.  There are latex errors.   In addition, the specification is not stated in general but for a specific configuration.  The relation correctness is still described in english and not mathematically, e.g., the function is one on one.  It is assumed that every account has a positive amount of money which is not correct (it could be zero).  In addition the transfer can be zero and that is not correct.  Below I try to improve the prompt ina new session.  

\begin{humanprompt}

\textbf{Generate a mathematical oriented precise description using latex notation of the following specification so that ambiguities are removed.  Clearly distinguish between the specific configuration that apply to the simulation and the general specification that has arbitrary but finite number of users and accounts. }

This section provides an intuitive description of the system with minimal mathematical notation.

We consider a collection of users, each of whom owns a personal set of accounts. Every account belongs to exactly one user, and no account is shared among different users. Taken together, all accounts in the system are distributed across all users without overlap.

Each account stores a non-negative whole number representing the amount of money currently held in that account.

The system provides a \emph{one-time transfer} operation. This operation takes as input:
\begin{itemize}
    \item a source user and one of their accounts,
    \item a transfer amount,
    \item a target user and one of their accounts.
\end{itemize}

The operation checks whether the source account contains enough money to cover the transfer amount. If it does, the specified amount is deducted from the source account and added to the target account. If the source account does not have sufficient funds, the system makes no changes.

A simulation of this one-time transfer proceeds in three steps:
\begin{enumerate}
    \item The system randomly selects a source user, one of that user's accounts, and a transfer amount. It also randomly selects a target user and one of that user's accounts.
    \item The transfer operation is executed using these randomly chosen inputs.
    \item The system outputs both the chosen inputs and the result of the transfer attempt.
\end{enumerate}

\paragraph*{Simulation Initialization}
For the simulation, we set up three users and six accounts. User~1 owns accounts~1 and~2, User~2 owns accounts~3 and~4, and User~3 owns accounts~5 and~6. Each account is assigned an initial amount of money chosen at random.

\end{humanprompt}

Here is what I got with the new prompt above from the AI (using ChatGPT).  (Interestingly I had to insist that the rendering will be in Latex.)

\begin{aireply}

\subsubsection*{General Mathematical Specification}

\paragraph*{Users and Accounts}

Let \(U\) be a finite, nonempty set of users:
\[
U = \{u_1, \dots, u_n\}.
\]

Let \(AC\) be a finite, nonempty set of accounts:
\[
AC = \{ac_1, \dots, ac_k\}.
\]

Each account belongs to exactly one user.  
Ownership is represented by a total function
\[
\mathrm{owner} : AC \to U.
\]

For each user \(u \in U\), the set of accounts owned by \(u\) is
\[
AC(u) = \{\, ac \in AC \mid \mathrm{owner}(ac) = u \,\}.
\]
These sets form a partition of \(AC\):
\[
AC = \biguplus_{u \in U} AC(u).
\]

\paragraph*{Account Balances}

Each account holds a non-negative integer balance:
\[
\mathrm{bal} : AC \to \mathbb{N}.
\]

\paragraph*{One-Time Transfer Operation}

A transfer attempt is a 4-tuple
\[
(s, a_s, m, a_t)
\]
where
\begin{itemize}
    \item \(s \in U\) is the source user,
    \item \(a_s \in AC(s)\) is the source account,
    \item \(m \in \mathbb{N}\) is the transfer amount,
    \item \(a_t \in AC(t)\) is a target account for some \(t \in U\).
\end{itemize}

The transfer operation updates the balance function according to:
\[
\mathrm{bal}' =
\begin{cases}
\text{if } \mathrm{bal}(a_s) \ge m: &
\begin{aligned}
\mathrm{bal}'(a_s) &= \mathrm{bal}(a_s) - m,\\
\mathrm{bal}'(a_t) &= \mathrm{bal}(a_t) + m,\\
\mathrm{bal}'(ac) &= \mathrm{bal}(ac) \quad \text{for all } ac \notin \{a_s, a_t\},
\end{aligned}
\\[1em]
\text{if } \mathrm{bal}(a_s) < m: &
\mathrm{bal}' = \mathrm{bal}.
\end{cases}
\]

\subsubsection*{Simulation-Specific Configuration}

For the simulation, instantiate the general specification with:
\[
U_{\mathrm{sim}} = \{u_1, u_2, u_3\},
\qquad
AC_{\mathrm{sim}} = \{ac_1, ac_2, ac_3, ac_4, ac_5, ac_6\}.
\]

The ownership relation is fixed as:
\[
AC(u_1) = \{ac_1, ac_2\}, \quad
AC(u_2) = \{ac_3, ac_4\}, \quad
AC(u_3) = \{ac_5, ac_6\}.
\]

Initial account balances are chosen randomly:
\[
\mathrm{bal}(ac_i) \in \mathbb{N}
\qquad \text{for each } ac_i \in AC_{\mathrm{sim}}.
\]

\paragraph*{Simulation Procedure}

A single simulation step consists of:

\begin{enumerate}
    \item Randomly select:
    \[
        s \in U_{\mathrm{sim}}, \quad
        a_s \in AC(s), \quad
        m \in \mathbb{N}, \quad
        t \in U_{\mathrm{sim}}, \quad
        a_t \in AC(t).
    \]
    \item Execute the transfer operation on these inputs.
    \item Output both the selected inputs and the resulting balance function \(\mathrm{bal}'\).
\end{enumerate}

\end{aireply}

It worked.  In order for it to work I had to ask to distinguish the specific simulation from the general specification.  Next, we want to determine if we can get a proof from the generated natural\_math description. 

\paragraph{Assertion - the overall money in the system is constant}

\begin{humanprompt}

\begin{lemma}
\label{assertion}
The amount of money $\sum_{ac_i \in AC} ac_i$ remains constant after each fun transfer. 
\end{lemma}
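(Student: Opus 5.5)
We read the statement against the AI-generated general specification just above: accounts form a finite set $AC$, balances are given by $\mathrm{bal}:AC\to\mathbb{N}$, and a transfer attempt is $(s,a_s,m,a_t)$ with $a_s\in AC(s)$. The first step repairs the ill-formed summand in the statement, exactly as in the ambiguity list of Section~\ref{sec:onetime}. We sum the \emph{balances}, not the account identifiers, and read the claim as $\sum_{ac\in AC}\mathrm{bal}'(ac)=\sum_{ac\in AC}\mathrm{bal}(ac)$ for every transfer attempt. Here $\mathrm{bal}'$ is the balance function produced by the case-defined update. With that reading fixed, the argument is essentially the proof of Lemma~\ref{lem:money}, re-derived from the AI's formulation.

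The proof splits on the guard. \textbf{Case $\mathrm{bal}(a_s)<m$:} the specification gives $\mathrm{bal}'=\mathrm{bal}$ outright, so the sums coincide. \textbf{Case $\mathrm{bal}(a_s)\ge m$:} we split the sum into the terms indexed by $AC\setminus\{a_s,a_t\}$, which are unchanged by the third clause of the update, plus the two affected terms. We then observe that $(\mathrm{bal}(a_s)-m)+(\mathrm{bal}(a_t)+m)=\mathrm{bal}(a_s)+\mathrm{bal}(a_t)$. The guard $\mathrm{bal}(a_s)\ge m$ is used only to ensure $\mathrm{bal}'$ still maps into $\mathbb{N}$; it plays no role in the cancellation.

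The one genuine obstacle is the degenerate input $a_s=a_t$, which the AI's specification does not exclude. In that case the three clauses of the pointwise update conflict: read literally, they assign both $\mathrm{bal}(a_s)-m$ and $\mathrm{bal}(a_s)+m$ to the same point. The splitting $AC\setminus\{a_s,a_t\}$ plus two terms would then double-count, and the argument would break. I would resolve this by interpreting the update in the Z style used earlier, $\mathrm{bal}'=\mathrm{bal}\oplus\{a_s\mapsto\mathrm{bal}(a_s)-m\}\oplus\{a_t\mapsto\mathrm{bal}(a_t)+m\}$, evaluated sequentially as a debit followed by a credit. Under that reading, a self-transfer leaves $\mathrm{bal}(a_s)$ unchanged, so the sum is preserved trivially. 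Alternatively, one may add the precondition $a_s\neq a_t$. Either way, the resolution should be flagged as an ambiguity the proof surfaced, in the spirit of Section~\ref{sec:onetime}.

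Having established single-step preservation, I would close by noting that applying it repeatedly along the simulation steps gives constancy of the total over any run. This is the base-and-step structure that Theorem~\ref{thm:inductive} later generalises to all reachable configurations.
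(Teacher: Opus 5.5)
Your core argument matches the paper's own proof of this lemma: split on $\mathrm{bal}(a_s)\ge m$, note $\mathrm{bal}'=\mathrm{bal}$ in the failing case, and in the succeeding case isolate the two affected terms and cancel $-m+m$. Your flag on $a_s=a_t$ is a real point that the paper passes over. Its decomposition $\sum_{ac\notin\{a_s,a_t\}}\mathrm{bal}'(ac)+\mathrm{bal}'(a_s)+\mathrm{bal}'(a_t)$ double-counts in that case, and the same decomposition is used in Lemma~\ref{lem:money}.

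However, the first of your two repairs does not do what you say. In $\mathrm{bal}\oplus\{a_s\mapsto\mathrm{bal}(a_s)-m\}\oplus\{a_t\mapsto\mathrm{bal}(a_t)+m\}$, both right-hand values are computed from the \emph{original} $\mathrm{bal}$, and the later override wins. So for $a_s=a_t$ you get $\mathrm{bal}'(a_s)=\mathrm{bal}(a_s)+m$, and the total grows by $m$. A genuinely sequential debit-then-credit must thread the intermediate state: $\mathrm{bal}_1=\mathrm{bal}\oplus\{a_s\mapsto\mathrm{bal}(a_s)-m\}$ and $\mathrm{bal}'=\mathrm{bal}_1\oplus\{a_t\mapsto\mathrm{bal}_1(a_t)+m\}$. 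Only with that definition is a self-transfer the identity.

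Nor is your formula ``the Z style used earlier''. The paper's schema overrides by the single two-maplet set $\{srcAC\mapsto\ldots,\,dstAC\mapsto\ldots\}$. For $srcAC=dstAC$ and $f\ge 1$ that set is not a function, so the schema's predicate is unsatisfiable and the self-transfer is simply not enabled. That is effectively your second option, the precondition $a_s\neq a_t$, which is the cleaner fix. With either corrected repair, the rest of your proof goes through, including the closing induction over a run.
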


\textbf{Provide a proof that lemma \ref{assertion} is correct}\\

\end{humanprompt}

Generated proof. 

\begin{aireply}

We provide a formal proof of Lemma \ref{assertion} based on the definitions given in the general specification.

\begin{proof}

Let \(\mathrm{bal} : AC \to \mathbb{N}\) be the current balance function, and consider a one-time transfer
\[
(s, a_s, m, a_t)
\]
as defined in the specification. There are two cases:

\medskip
\noindent \textbf{Case 1: Transfer is allowed (\(\mathrm{bal}(a_s) \ge m\))}

By definition, the updated balance function \(\mathrm{bal}'\) satisfies:
\[
\mathrm{bal}'(a_s) = \mathrm{bal}(a_s) - m, \qquad
\mathrm{bal}'(a_t) = \mathrm{bal}(a_t) + m, \qquad
\mathrm{bal}'(ac) = \mathrm{bal}(ac) \text{ for all } ac \notin \{a_s, a_t\}.
\]

The total money in the system after the transfer is
\[
\sum_{ac \in AC} \mathrm{bal}'(ac)
= \sum_{ac \notin \{a_s, a_t\}} \mathrm{bal}'(ac) + \mathrm{bal}'(a_s) + \mathrm{bal}'(a_t)
= \sum_{ac \notin \{a_s, a_t\}} \mathrm{bal}(ac) + (\mathrm{bal}(a_s) - m) + (\mathrm{bal}(a_t) + m).
\]

Simplifying,
\[
\sum_{ac \in AC} \mathrm{bal}'(ac)
= \sum_{ac \notin \{a_s, a_t\}} \mathrm{bal}(ac) + \mathrm{bal}(a_s) + \mathrm{bal}(a_t)
= \sum_{ac \in AC} \mathrm{bal}(ac).
\]

Hence, the total money remains constant in this case.

\medskip
\noindent \textbf{Case 2: Transfer is not allowed (\(\mathrm{bal}(a_s) < m\))}

By definition, the balance function does not change:
\[
\mathrm{bal}' = \mathrm{bal}.
\]

Therefore,
\[
\sum_{ac \in AC} \mathrm{bal}'(ac) = \sum_{ac \in AC} \mathrm{bal}(ac),
\]
so the total money remains constant.

\medskip
\noindent \textbf{Conclusion:} 

In both possible cases, the total sum of all account balances does not change. Therefore, Lemma \ref{assertion} is correct:
\[
\sum_{ac \in AC} \mathrm{bal}'(ac) = \sum_{ac \in AC} \mathrm{bal}(ac) \quad \forall \text{ one-time transfers}.
\]

\end{proof}

\end{aireply}

I reviewed the proof and it looks fine. 

\subsubsection{Natural to natural\_math to proof when the natural specification is not complete}

Next, I injected an error in the specification.  I would like to determine if the AI will spot it.   
Specifically,  I remove the check that there is enough money in the account to transfer. 

\begin{humanprompt}

\textbf{Generate a mathematical oriented precise description using latex notation of the following specification so that ambiguities are removed.  Clearly distinguish between the specific configuration that apply to the simulation and the general specification that has arbitrary but finite number of users and accounts. }

This section provides an intuitive description of the system with minimal mathematical notation.

We consider a collection of users, each of whom owns a personal set of accounts. Every account belongs to exactly one user, and no account is shared among different users. Taken together, all accounts in the system are distributed across all users without overlap.

Each account stores a non-negative whole number representing the amount of money currently held in that account.

The system provides a \emph{one-time transfer} operation. This operation takes as input:
\begin{itemize}
    \item a source user and one of their accounts,
    \item a transfer amount,
    \item a target user and one of their accounts.
\end{itemize}

The specified amount is deducted from the source account and added to the target account. 

A simulation of this one-time transfer proceeds in three steps:
\begin{enumerate}
    \item The system randomly selects a source user, one of that user's accounts, and a transfer amount. It also randomly selects a target user and one of that user's accounts.
    \item The transfer operation is executed using these randomly chosen inputs.
    \item The system outputs both the chosen inputs and the result of the transfer attempt.
\end{enumerate}

\paragraph*{Simulation Initialization}
For the simulation, we set up three users and six accounts. User~1 owns accounts~1 and~2, User~2 owns accounts~3 and~4, and User~3 owns accounts~5 and~6. Each account is assigned an initial amount of money chosen at random.

\end{humanprompt}

The AI (ChatGPT) generated the following natural\_math specification.   

\begin{aireply}

\subsubsection*{Mathematical Specification of the System}

\paragraph*{General Specification}

\paragraph{Users and Accounts.}
Let \(U\) be a finite, nonempty set of users,
\[
U = \{u_1, u_2, \ldots, u_n\},
\]
and let \(AC\) be a finite, nonempty set of accounts,
\[
AC = \{a_1, a_2, \ldots, a_m\}.
\]

\paragraph{Ownership Function.}
We define a total function
\[
\owner : AC \to U,
\]
where for each account \(a \in AC\), the value \(\owner(a)\) is the unique user who owns \(a\).
Thus, every account belongs to exactly one user, and no account is shared among different users.

\paragraph{Balances.}
A system state is a function
\[
B : AC \to \mathbb{N}.
\]

\paragraph{Transfer Operation.}
A one-time transfer is a tuple
\[
(srcU,\ srcA,\ amt,\ tgtU,\ tgtA)
\]
such that
\[
\owner(srcA) = srcU, \qquad \owner(tgtA) = tgtU, \qquad amt \in \mathbb{N}.
\]

The resulting state \(B'\) after executing the transfer on state \(B\) is
\[
B'(a) =
\begin{cases}
B(a) - amt & \text{if } a = srcA,\\[1.5mm]
B(a) + amt & \text{if } a = tgtA,\\[1.5mm]
B(a) & \text{otherwise}.
\end{cases}
\]

\paragraph*{Specific Simulation Configuration}

The simulation uses
\[
U = \{u_1, u_2, u_3\}, \qquad 
AC = \{a_1, a_2, a_3, a_4, a_5, a_6\}.
\]

The ownership function is explicitly defined by
\[
\owner(a_1) = u_1,\quad \owner(a_2) = u_1,
\]
\[
\owner(a_3) = u_2,\quad \owner(a_4) = u_2,
\]
\[
\owner(a_5) = u_3,\quad \owner(a_6) = u_3.
\]

Initial balances are chosen randomly:
\[
B(a_i) \in \mathbb{N} \quad \text{for } i = 1,\ldots,6.
\]

\end{aireply}

\paragraph{Assertion - the overall money in the system is constant}

\begin{humanprompt}

\begin{lemma}
\label{assertion}
The amount of money $\sum_{ac_i \in AC} ac_i$ remains constant after each fun transfer. 
\end{lemma}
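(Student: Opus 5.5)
The plan is to reuse the cancellation argument of Lemma~\ref{lem:money} against the new, check-free natural\_math specification. Along the way I will make explicit the two places where this specification stops the lemma from being well-posed as written. First I restate the lemma in its repaired form, as in the ambiguity list of Case Study~A: the summand is the balance $B(a)$, not the account identifier, and the claim is
\[
\sum_{a\in AC}B'(a)=\sum_{a\in AC}B(a)
\]
for an arbitrary finite $AC$ and an arbitrary $B$, not only the six-account simulation. With that statement fixed, the core computation is the one already given. I split $AC$ into $AC\setminus\{srcA,tgtA\}$ and the two affected accounts, substitute the case definition of $B'$, and observe that $-amt+amt=0$.

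That computation silently assumes two things, and I will treat each as a separate case rather than hide it.

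\textbf{Self-transfer, $srcA=tgtA$.} Nothing in the tuple $(srcU,srcA,amt,tgtU,tgtA)$ forbids this. Read top-down, the case definition then gives $B'(srcA)=B(srcA)-amt$, and the $+amt$ branch never fires, so the total drops by $amt$. I would record this as a genuine ambiguity, and I expect the same gap to be latent in Lemma~\ref{lem:money}. I would then either add the precondition $srcA\neq tgtA$, or adopt the override reading $B'=B\oplus\{srcA\mapsto B(srcA)-amt+amt\}$, under which the self-transfer is a no-op. Under either repair the sum is preserved trivially.

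\textbf{Typing of $B'$.} This is where the injected error bites, and I expect it to be the main obstacle. Without the guard $B(srcA)\ge amt$, the value $B(srcA)-amt$ may be negative, so $B'$ is not a function $AC\to\mathbb{N}$ and the operation is ill-typed. The lemma's conclusion is then not false but meaningless in the declared state space. My first move is to enlarge the codomain to $\mathbb{Z}$, in which the cancellation argument goes through verbatim, since it uses only additive inverses and never the sign of $B(srcA)-amt$. This matches the earlier remark that the proof ``holds either way.'' I would then state plainly what this does and does not establish. Conservation survives, so the invariant alone cannot detect the missing check. The specification, however, violates the state invariant $B:AC\to\mathbb{N}$, and this is exactly the defect a reviewer, or the AI, should flag.

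The write-up therefore ends with three results. The first is a proof of conservation under $srcA\neq tgtA$ with balances taken in $\mathbb{Z}$. The second is a separate, one-line argument for the self-transfer case under the chosen repair. The third is an explicit note that non-negativity of balances is not an invariant of this specification, which is the observation the error-injection experiment is designed to test.
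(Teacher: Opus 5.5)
Your proposal is correct and follows essentially the same route as the paper's proof of this lemma: repair the statement so that the summand is the balance $B(a)$ over an arbitrary finite $AC$, take $B:AC\to\mathbb{Z}$ so that dropping the guard $B(srcA)\ge amt$ costs nothing, and cancel $-amt+amt$. Your separate treatment of $srcA=tgtA$ is a genuine improvement, since the paper's decomposition $\sum_{a\neq srcA,tgtA}B(a)+B'(srcA)+B'(tgtA)$ counts that account twice and so silently assumes the two accounts are distinct (the same unstated assumption sits in Lemma~\ref{lem:money}).
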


\textbf{Provide a proof that lemma \ref{assertion} is correct or highlight why it is not.}\\

\end{humanprompt}

Following is the generated AI response that highlighted the ambiguity I intentionally injected. 

\begin{aireply}

\paragraph*{Analysis of Problems and Ambiguities}

Before proving Lemma~\ref{assertion}, we list the issues and ambiguities
that must be resolved in order for the statement to be mathematically valid.

\begin{enumerate}
    \item \textbf{Incorrect summation notation.}
    The expression
    \[
        \sum_{ac_i \in AC} ac_i
    \]
    is not meaningful: the summand should be the \emph{balance} of account
    \(ac_i\), i.e.\ \(B(ac_i)\), not the account identifier itself.

    \item \textbf{Unclear reference to ``fun transfer''.}
    The intended operation is the formally defined one-time transfer.
    We assume that it is the operation described earlier, producing a new
    state \(B'\).

    \item \textbf{Missing assumption about transfer execution.}
    The specification does not say whether a transfer is allowed even when
    \(B(srcA) < amt\), i.e.\ if balances may become negative.
    The proof below works for either case, because it only uses algebraic
    cancellation \( -amt + amt = 0 \).

    \item \textbf{Missing distinction between general state and simulation state.}
    The lemma must be stated for arbitrary \(AC\) and arbitrary states
    \(B\), not only for the fixed six-account simulation.

    \item \textbf{Incorrect type of the transferred quantity.}
    The lemma refers to “amount of money”, but mathematically this means
    the sum of the values of the balance function \(B\).
\end{enumerate}

These issues are corrected in the formal lemma and proof below.

\paragraph*{Lemma: Conservation of Total Money}

\begin{lemma}[Conservation of total money]
\label{assertion}
Let \(B : AC \to \mathbb{Z}\) be the balance function before a one-time transfer 
\((srcU,srcA,amt,tgtU,tgtA)\), and let \(B'\) be the updated balance function
obtained using the transfer semantics
\[
B'(a) =
\begin{cases}
B(a) - amt & \text{if } a = srcA,\\[1mm]
B(a) + amt & \text{if } a = tgtA,\\[1mm]
B(a)        & \text{otherwise}.
\end{cases}
\]
Then the total amount of money in the system is preserved:
\[
\sum_{a\in AC} B'(a)
\,=\, 
\sum_{a\in AC} B(a).
\]
\end{lemma}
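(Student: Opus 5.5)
The plan is to compute $\sum_{a\in AC}B'(a)$ directly by splitting the index set $AC$ into the accounts the transfer touches and the ones it does not, exactly as in the proof of Lemma~\ref{lem:money}. Because the balance function here is $B:AC\to\mathbb{Z}$, subtraction is always defined. So, unlike in the $\mathbb{N}$-valued setting, I do not need a precondition $B(srcA)\ge amt$. The argument will use only additive cancellation in the integers, which is the point made in item~3 of the ambiguity list.

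\textbf{Case $srcA\neq tgtA$.} Here I write
\[
\sum_{a\in AC}B'(a)=\sum_{a\in AC\setminus\{srcA,tgtA\}}B'(a)+B'(srcA)+B'(tgtA).
\]
This is legitimate because $AC$ is finite and $\{srcA,tgtA\}\subseteq AC$. On the complement, the ``otherwise'' clause of the definition gives $B'(a)=B(a)$. Substituting $B(srcA)-amt$ and $B(tgtA)+amt$ for the two distinguished terms and regrouping, the $-amt$ and $+amt$ cancel. What remains is $\sum_{a\in AC}B(a)$.

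\textbf{Case $srcA=tgtA$, a self-transfer.} This is the step I expect to be the main obstacle, because here the case definition of $B'$ is not well defined as written. For $a=srcA=tgtA$, both the first and second clauses apply, and they give different values unless $amt=0$. I would resolve this by reading the definition as the sequential update (debit, then credit) or, equivalently, as the Z override $B\oplus\{srcA\mapsto B(srcA)-amt\}\oplus\{tgtA\mapsto B(tgtA)+amt\}$ evaluated in order. Under that reading $B'(srcA)=B(srcA)-amt+amt=B(srcA)$, so $B'=B$ and the sum is trivially preserved. I would state this convention explicitly as an added assumption (or exclude $srcA=tgtA$) rather than leave it implicit. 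This is precisely the kind of ambiguity the review is meant to surface.

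Finally, I would remark that the conclusion does not depend on the users $srcU,tgtU$ or on the ownership constraints. It also holds for any finite $AC$ and any $B$, answering item~4 of the ambiguity list. Since the $\mathbb{N}$-valued guarded transfer is a special case (with the rejected branch $B'=B$), the result subsumes Lemma~\ref{lem:money}.
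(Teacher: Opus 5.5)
Your case $srcA\neq tgtA$ is exactly the paper's proof. You isolate the two touched accounts, use the ``otherwise'' clause on the rest, and cancel $-amt+amt$, with no appeal to $B(srcA)\ge amt$. Your self-transfer case goes beyond the paper, and rightly so. The paper's first equality $\sum_{a}B'(a)=\sum_{a\neq srcA,tgtA}B(a)+(B(srcA)-amt)+(B(tgtA)+amt)$ silently assumes $srcA\neq tgtA$. Otherwise the single account is counted twice on the right, and the two clauses defining $B'$ conflict there. So the statement genuinely needs an extra hypothesis in that case.

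The one thing to fix is your claim that the debit-then-credit reading is \emph{equivalent} to the override $B\oplus\{srcA\mapsto B(srcA)-amt\}\oplus\{tgtA\mapsto B(tgtA)+amt\}$. It is not: both right-hand sides refer to the original $B$. When $srcA=tgtA$ the later override wins and gives $B'(srcA)=B(srcA)+amt$, which creates $amt$. The first-match reading of the case split gives $B(srcA)-amt$, which destroys $amt$. Only the genuinely sequential update conserves money: $B_1=B\oplus\{srcA\mapsto B(srcA)-amt\}$, then $B'=B_1\oplus\{tgtA\mapsto B_1(tgtA)+amt\}$, which yields $B'=B$. Since the two most literal readings make the lemma false for $srcA=tgtA$ and $amt\neq 0$, your added assumption is necessary, not optional. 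State either $srcA\neq tgtA$ or the sequential semantics explicitly as a hypothesis, and drop the ``equivalently''.
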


\begin{proof}
We compute the sum after the transfer using the definition of \(B'\):
\[
\sum_{a\in AC} B'(a)
= 
\sum_{\substack{a\in AC\\ a\neq srcA,tgtA}} B(a)
   + \big(B(srcA) - amt\big)
   + \big(B(tgtA) + amt\big).
\]

Rearrange the right-hand side:
\[
\sum_{a\in AC} B'(a)
=
\sum_{a\in AC} B(a)
   - amt + amt
=
\sum_{a\in AC} B(a).
\]

Thus the total money is unchanged by the transfer.
\end{proof}

\paragraph{Remark.}
The proof does not rely on the assumption that \(B(srcA) \ge amt\).  
If your system disallows transfers that would make a balance negative,
then rejected transfers satisfy \(B' = B\), and the lemma still holds.  
If later the system includes fees, taxes, or creation/destruction of funds,
this lemma must be modified accordingly.

\end{aireply}


\subsubsection{From natural\_math to the Z formal specification notation}

\paragraph{}
To overcome attention issues when generating code from a specification, we introduce
interfaces automatically. Interface decomposition separates the specification into
well-defined interaction boundaries, reducing cognitive load and enabling local
reasoning. This relationship is illustrated in Diagram~\ref{fig:attention-interface}.

\begin{figure}[h]
\centering
\begin{tikzpicture}[
    node distance=4cm,
    every node/.style={draw, rectangle, rounded corners, align=center, minimum height=1cm},
    thick,
    <->/.style={latex-latex}
]

\node (attention) {Attention};
\node (interface) [right of=attention] {Interface\\Decomposition};

\draw[<->] (attention) -- (interface);

\end{tikzpicture}
\caption{Relationship between attention and interface decomposition}
\label{fig:attention-interface}
\end{figure}

\paragraph{}
We attempt to translate the natural language specification to another intermediate representation which we name natural\_Z\_Schema. We explore the use of the \emph{Natural Logic Z Schema} representation to overcome attention issues in AI systems through decomposition into interfaces using the Z schema
mechanism (see \ref{fig:natural-z-schema}). This representation preserves as much natural language as possible to enhance readability, while using the Z schema construct to define interfaces, invariants, and
associated proofs.

\begin{figure}[h]
\centering
\begin{tikzpicture}[
    every node/.style={
        draw,
        rounded corners,
        align=center,
        minimum width=3.4cm,
        minimum height=1cm
    },
    arrow/.style={->, thick}
]

\node (natural)  at (0,0)   {Natural\\Language};
\node (logic)    at (5,0)   {Natural\\Logic};
\node (zschema)  at (5,-3)  {Natural Logic\\Z Schema};
\node (code)     at (10,0)  {Code};
\node (proof)    at (10,-3) {Proofs\\\& Invariants};

\draw[arrow] (natural) -- (logic);
\draw[arrow] (natural) |- (zschema);

\draw[arrow] (logic) -- (code);
\draw[arrow] (logic) -- (proof);
\draw[arrow] (zschema) -- (code);
\draw[arrow] (zschema) -- (proof);

\draw[arrow] (proof) -- (code);

\end{tikzpicture}
\caption{Separation of code generation parts of the specification using Z Schema}
\label{fig:natural-z-schema}
\end{figure}

\paragraph{From natural\_math to the Z formal specification notation - First trail}

\begin{humanprompt}
    
The Z formal specification notation has the notion of schema.  A Z schema denotes a set of states or state transitions constrained by logic; at the code level it corresponds to a typed interface with invariants and operation contracts.  

\textbf{Generate a mathematical oriented precise description using latex notation of the following specification so that ambiguities are removed.  Clearly distinguish between the specific configuration that apply to the simulation and the general specification that has arbitrary but finite number of users and accounts.  Use the Z schema notation to explicitly distinguish interfaces that transfer funds and store and retrieve fund scheduled fund transfers or any other interfaces that you decide you need to define. }

This section provides an intuitive description of the system with minimal mathematical notation.

We consider a collection of users, each of whom owns a personal set of accounts. Every account belongs to exactly one user, and no account is shared among different users. Taken together, all accounts in the system are distributed across all users without overlap.

Each account stores a non-negative whole number representing the amount of money currently held in that account.

The system provides a \emph{one-time transfer} operation. This operation takes as input:
\begin{itemize}
    \item a source user and one of their accounts,
    \item a transfer amount,
    \item a target user and one of their accounts.
\end{itemize}

The operation checks whether the source account contains enough money to cover the transfer amount. If it does, the specified amount is deducted from the source account and added to the target account. If the source account does not have sufficient funds, the system makes no changes.

A simulation of this one-time transfer proceeds in three steps:
\begin{enumerate}
    \item The system randomly selects a source user, one of that user's accounts, and a transfer amount. It also randomly selects a target user and one of that user's accounts.
    \item The transfer operation is executed using these randomly chosen inputs.
    \item The system outputs both the chosen inputs and the result of the transfer attempt.
\end{enumerate}

\paragraph*{Simulation Initialization}
For the simulation, we set up three users and six accounts. User~1 owns accounts~1 and~2, User~2 owns accounts~3 and~4, and User~3 owns accounts~5 and~6. Each account is assigned an initial amount of money chosen at random.

\end{humanprompt}

\begin{aireply}

\subparagraph{Z-Style Formal Specification with Explicit Environment}

We model the system using Z-style schemas encoded in standard mathematical notation.
Let $USER$ and $ACCOUNT$ be given finite sets, and let
\[
Money \;\triangleq\; \mathbb{N}.
\]

\paragraph{General system state.}

Account ownership and balances are defined as follows:
\[
\begin{array}{l}
Ownership \;\triangleq\;
\left\{
\begin{array}{l}
owns : USER \rightharpoonup \mathcal{P}(ACCOUNT) \\
\forall u_1, u_2 \in USER,\;
u_1 \neq u_2 \Rightarrow
owns(u_1) \cap owns(u_2) = \emptyset
\end{array}
\right.
\\[1.2ex]
Balances \;\triangleq\;
\left\{
\begin{array}{l}
balance : ACCOUNT \rightharpoonup Money \\
\mathrm{dom}(balance) = \bigcup \mathrm{ran}(owns)
\end{array}
\right.
\\[1.2ex]
SystemState \;\triangleq\; Ownership \;\wedge\; Balances
\end{array}
\]

\paragraph{Environment.}

The environment nondeterministically selects transfer parameters but does not
change the system state:
\[
\begin{array}{l}
Environment \;\triangleq\;
\left\{
\begin{array}{l}
srcUser?, dstUser? \in USER \\
srcAcc?, dstAcc? \in ACCOUNT \\
amount? \in Money
\end{array}
\right.
\end{array}
\]

\paragraph{One-time transfer interface.}

The system reacts to the environment inputs. If sufficient funds exist, balances
are updated; otherwise, the state is unchanged.

\[
\begin{array}{l}
OneTimeTransfer \;\triangleq\;
\left\{
\begin{array}{l}
SystemState, SystemState' , Environment \\
srcAcc? \in owns(srcUser?) \\
dstAcc? \in owns(dstUser?) \\
amount? \le balance(srcAcc?) \\
balance' =
balance \oplus
\{
srcAcc? \mapsto balance(srcAcc?) - amount?, \\
\qquad\qquad
dstAcc? \mapsto balance(dstAcc?) + amount?
\}
\end{array}
\right.
\end{array}
\]

\[
\begin{array}{l}
NoTransfer \;\triangleq\;
\left\{
\begin{array}{l}
SystemState, SystemState', Environment \\
SystemState' = SystemState \\
srcAcc? \in owns(srcUser?) \\
dstAcc? \in owns(dstUser?) \\
amount? > balance(srcAcc?)
\end{array}
\right.
\end{array}
\]

\[
Transfer \;\triangleq\; OneTimeTransfer \;\lor\; NoTransfer
\]

\paragraph{Observation interface.}

The simulation observes the selected inputs and whether the transfer succeeded.
\[
\begin{array}{l}
TransferRecord \;\triangleq\;
\left\{
\begin{array}{l}
srcUser, dstUser \in USER \\
srcAcc, dstAcc \in ACCOUNT \\
amount \in Money \\
success \in \mathbb{B}
\end{array}
\right.
\end{array}
\]

\[
\begin{array}{l}
SimulationStep \;\triangleq\;
\left\{
\begin{array}{l}
SystemState, SystemState', Environment, record! \\
Transfer \\
record!.srcUser = srcUser? \\
record!.srcAcc  = srcAcc? \\
record!.dstUser = dstUser? \\
record!.dstAcc  = dstAcc? \\
record!.amount  = amount? \\
record!.success \Leftrightarrow amount? \le balance(srcAcc?)
\end{array}
\right.
\end{array}
\]

\paragraph{Concrete simulation configuration.}

The simulation instantiates the abstract specification with fixed sets:
\[
USER = \{u_1,u_2,u_3\}, \qquad
ACCOUNT = \{a_1,a_2,a_3,a_4,a_5,a_6\}
\]

with ownership
\[
owns =
\{
u_1 \mapsto \{a_1,a_2\},
u_2 \mapsto \{a_3,a_4\},
u_3 \mapsto \{a_5,a_6\}
\},
\]
and balances initialized nondeterministically subject to
$\mathrm{dom}(balance)=ACCOUNT$.

\end{aireply}

The definition above is correct but the notation became cumbersome.  I'll make a second attempt below to obtain a that uses the Z notation only for the interfaces definitions using the schema mechanism.  

\paragraph{From natural\_math to the Z formal specification notation - Second trail}

\begin{humanprompt}

The Z formal specification notation has the notion of schema.  A Z schema denotes a set of states or state transitions constrained by logic; at the code level it corresponds to a typed interface with invariants and operation contracts.  

\textbf{ Modify the following specification to use the Z schema notation to explicitly distinguish interfaces that transfer funds and store and retrieve fund scheduled fund transfers.  Leave other aspects of the specification and do not translate them to Z. }

Here is the specification \ref{scheudleTransfer}.
    
\end{humanprompt}

This step required several iterations.  Essentially had to specify that my original natural\_logic specification should "call" the new defined schemas.

\begin{aireply}

The Z formal specification notation has the notion of a \emph{schema}. A Z schema denotes a set of states or state transitions constrained by logic; at the code level it corresponds to a typed interface with invariants and operation contracts.

The following is a single, clean specification written in \emph{plain Z notation}.  
No Z libraries, schema environments, or special packages are assumed.  
The original text is minimally edited so that it explicitly refers to the relevant Z schemas, which are defined using standard mathematical layout only.

\subparagraph{System model}

We are given a set of users
\[
U = \{ u_1, \ldots, u_n\}.
\]
Each user has a set of accounts that belong to that user. We have a set of accounts
\[
AC = \{ac_1,\ldots, ac_k\}
\]
and a function
\[
AC(\cdot) : U \rightarrow \mathcal{P}(AC).
\]

For any user $u \in U$, $AC(u) \subseteq AC$ is the subset of accounts that belong to $u$.

We assume that for any two users $u_i, u_j \in U$,
\[
AC(u_i) \cap AC(u_j) = \emptyset
\]
and
\[
\bigcup_{u \in U} AC(u) = AC.
\]
Thus, each account belongs to exactly one user.

Each account $ac \in AC$ has a balance represented by a natural number (including zero).  
Account balances are modeled by the Z state schema \emph{AccountState} defined below.

Fund transfers are a function of the current time. The current time is a natural number initialized to $0$ at the beginning of the simulation. At each stage of the simulation the current time is first incremented by one.

\subparagraph{Fund transfer operations}

\paragraph{One-time transfer.}
The one-time transfer is an interface that accepts a source user $u_i \in U$, a source account $ac_i \in AC(u_i)$, a target user $u_j \in U$, a target account $ac_j \in AC(u_j)$, and a transfer amount $f$.

The semantics of this interface are defined by the Z schemas \emph{OneTimeTransfer} and \emph{OneTimeTransferFail}. If the source account has sufficient funds, balances are updated; otherwise, the state is unchanged.

\paragraph{Planned transfer.}
A planned transfer is a one-time transfer scheduled to occur at a future time $t$.

Planned transfers are stored in a planned transfer set $P$, modeled by the Z state schema \emph{PlannedTransferStore}.  
Scheduling a planned transfer is defined by the Z schema \emph{SchedulePlannedTransfer}.  
Execution of planned transfers due at the current time is defined by the Z schema \emph{ExecutePlannedTransfers}.

\subparagraph{Simulation}

The simulation of fund transfers consists of stages. In each stage the following steps are taken.

\begin{enumerate}
  \item Increment the current time $t$ by $1$.
  \item \textbf{One-time transfer.} Randomly choose inputs and perform the one-time transfer using the interface defined by \emph{OneTimeTransfer} and \emph{OneTimeTransferFail}.
  \item \textbf{Schedule a planned transfer.} Randomly choose inputs for a one-time transfer and a time $t' > t$. Add the planned transfer to $P$ using the interface defined by \emph{SchedulePlannedTransfer}.
  \item \textbf{Perform planned transfers.} For each planned transfer scheduled at time $t$, execute it using the interface defined by \emph{ExecutePlannedTransfers}.
  \item Report the chosen inputs, the result of each transfer, the transfer type (one-time or planned), the current time $t$, and the state of $P$ before planned transfers are executed.
\end{enumerate}

\subparagraph{Simulation initialization}

Choose three users and six accounts. User one owns accounts 1 and 2, user two owns accounts 3 and 4, and user three owns accounts 5 and 6. Choose the initial balance of each account randomly.

The current time is initialized to $0$. The planned transfer set $P$ is initialized to the empty set. Each element of $P$ has type
\[
U \times AC \times U \times AC \times \mathbb{N}^+ \times \mathbb{N}.
\]

\subparagraph{Z schemas (plain Z notation)}

\paragraph{AccountState}
\[
AccountState \;\defs\;
\left[
\begin{array}{l}
balances : AC \rightarrow \mathbb{N}
\end{array}
\right]
\]

\paragraph{PlannedTransfer}
\[
PlannedTransfer \;\defs\;
U \times AC \times U \times AC \times \mathbb{N}^+ \times \mathbb{N}
\]

\paragraph{PlannedTransferStore}
\[
PlannedTransferStore \;\defs\;
\left[
\begin{array}{l}
P : \mathcal{P}(PlannedTransfer)
\end{array}
\right]
\]

\paragraph{OneTimeTransfer}
\[
OneTimeTransfer \;\defs\;
\left[
\begin{array}{l}
\Delta AccountState \\
srcU, dstU : U \\
srcAC, dstAC : AC \\
f : \mathbb{N}^+ \\
\hline
balances(srcAC) \ge f \\
balances' = balances \oplus
\{
srcAC \mapsto balances(srcAC) - f,\;
dstAC \mapsto balances(dstAC) + f
\}
\end{array}
\right]
\]

\paragraph{OneTimeTransferFail}
\[
OneTimeTransferFail \;\defs\;
\left[
\begin{array}{l}
\Xi AccountState \\
srcU, dstU : U \\
srcAC, dstAC : AC \\
f : \mathbb{N}^+ \\
\hline
balances(srcAC) < f
\end{array}
\right]
\]

\paragraph{SchedulePlannedTransfer}
\[
SchedulePlannedTransfer \;\defs\;
\left[
\begin{array}{l}
\Delta PlannedTransferStore \\
srcU, dstU : U \\
srcAC, dstAC : AC \\
f : \mathbb{N}^+ \\
t' : \mathbb{N} \\
\hline
P' = P \cup \{ (srcU, srcAC, dstU, dstAC, f, t') \}
\end{array}
\right]
\]

\paragraph{ExecutePlannedTransfers}
\[
ExecutePlannedTransfers \;\defs\;
\left[
\begin{array}{l}
\Delta AccountState \\
\Delta PlannedTransferStore \\
t : \mathbb{N} \\
\hline
\forall p \in P @ \\
\quad p_6 = t \land balances(p_2) \ge p_5 \implies \\
\quad balances' =
balances \oplus
\{
p_2 \mapsto balances(p_2) - p_5,\;
p_4 \mapsto balances(p_4) + p_5
\} \\
P' = \{ p \in P \mid p_6 \neq t \}
\end{array}
\right]
\]

\end{aireply}

There are two problems with the above specification.  One is that the ExecutePlannedTransfers schema is described for the specific configuration.  The other issue is that the calls to the schema do not describe the inputs explicitly.  

\paragraph{From natural\_math to the Z formal specification notation - Third trail}

\begin{humanprompt}
Please render again fixing two problems.  One is that when stating that you will call a schema in my original specification you need to specify the inputs the schema will take.  The other is that ExecutePlannedTransfer should be described in general and not for the specific simulation configuration.  
\end{humanprompt}

\begin{aireply}

The Z formal specification notation has the notion of a \emph{schema}. A Z schema denotes a set of states or state transitions constrained by logic; at the code level it corresponds to a typed interface with invariants and operation contracts.

The following is a single, clean specification written in \emph{plain Z notation} (no Z libraries assumed).  
The original text is minimally edited so that:
\begin{itemize}
  \item whenever the text refers to a Z schema, the \emph{inputs of that schema are explicitly stated}, and
  \item the schema \emph{ExecutePlannedTransfers} is defined \emph{independently of any particular simulation configuration}.
\end{itemize}

\paragraph{System model}

We are given a set of users
\[
U = \{ u_1, \ldots, u_n\}.
\]
Each user has a set of accounts that belong to that user. We have a set of accounts
\[
AC = \{ac_1,\ldots, ac_k\}
\]
and a function
\[
AC(\cdot) : U \rightarrow \mathcal{P}(AC).
\]

For any user $u \in U$, $AC(u) \subseteq AC$ is the subset of accounts that belong to $u$.

We assume that for any two users $u_i, u_j \in U$,
\[
AC(u_i) \cap AC(u_j) = \emptyset
\]
and
\[
\bigcup_{u \in U} AC(u) = AC.
\]
Thus, each account belongs to exactly one user.

Each account $ac \in AC$ has a balance represented by a natural number (including zero).  
Account balances are modeled by the Z state schema \emph{AccountState(balances)}.

Fund transfers are a function of the current time. The current time is a natural number initialized to $0$ at the beginning of the simulation. At each stage of the simulation the current time is first incremented by one.

\subparagraph{Fund transfer operations}

\paragraph{One-time transfer.}
The one-time transfer is an interface that accepts as inputs:
\[
(u_i, ac_i, u_j, ac_j, f)
\]
where $u_i, u_j \in U$, $ac_i \in AC(u_i)$, $ac_j \in AC(u_j)$, and $f \in \mathbb{N}^+$.

Given these inputs, the operation is defined by invoking either the schema
\[
OneTimeTransfer(srcU{=}u_i, srcAC{=}ac_i, dstU{=}u_j, dstAC{=}ac_j, f)
\]
or, when insufficient funds exist, the schema
\[
OneTimeTransferFail(srcU{=}u_i, srcAC{=}ac_i, dstU{=}u_j, dstAC{=}ac_j, f).
\]

\paragraph{Planned transfer.}
A planned transfer is a one-time transfer scheduled to occur at a future time $t$.

Scheduling a planned transfer accepts as inputs:
\[
(u_i, ac_i, u_j, ac_j, f, t')
\]
with $t' > t$, and is defined by invoking the schema
\[
SchedulePlannedTransfer(srcU{=}u_i, srcAC{=}ac_i, dstU{=}u_j, dstAC{=}ac_j, f, t').
\]

Planned transfers are stored in the planned transfer set $P$, which is modeled by the Z state schema \emph{PlannedTransferStore(P)}.

Execution of planned transfers due at the current time $t$ is defined by invoking the schema
\[
ExecutePlannedTransfers(t),
\]
which examines all planned transfers in $P$ scheduled for time $t$ and applies the corresponding one-time transfers.

\subparagraph{Simulation}

The simulation of fund transfers consists of stages. In each stage the following steps are taken.

\begin{enumerate}
  \item Increment the current time $t$ by $1$.
  \item \textbf{One-time transfer.} Randomly choose inputs $(u_i, ac_i, u_j, ac_j, f)$ and invoke the schema
  \[
  OneTimeTransfer(srcU{=}u_i, srcAC{=}ac_i, dstU{=}u_j, dstAC{=}ac_j, f)
  \]
  or \emph{OneTimeTransferFail} if the precondition is not met.
  \item \textbf{Schedule a planned transfer.} Randomly choose inputs $(u_i, ac_i, u_j, ac_j, f, t')$ with $t' > t$ and invoke
  \[
  SchedulePlannedTransfer(srcU{=}u_i, srcAC{=}ac_i, dstU{=}u_j, dstAC{=}ac_j, f, t').
  \]
  \item \textbf{Perform planned transfers.} Invoke
  \[
  ExecutePlannedTransfers(t)
  \]
  to execute all planned transfers scheduled at the current time.
  \item Report the chosen inputs, the result of each transfer, the transfer type (one-time or planned), the current time $t$, and the state of $P$ immediately before planned transfers are executed.
\end{enumerate}

\subparagraph{Simulation initialization}

Choose three users and six accounts. User one owns accounts 1 and 2, user two owns accounts 3 and 4, and user three owns accounts 5 and 6. Choose the initial balance of each account randomly.

The current time is initialized to $0$. The planned transfer set $P$ is initialized to the empty set. Each element of $P$ has type
\[
U \times AC \times U \times AC \times \mathbb{N}^+ \times \mathbb{N}.
\]

\subparagraph{Z schemas (plain Z notation)}

\paragraph{AccountState}
\[
AccountState \;\defs\;
\left[
\begin{array}{l}
balances : AC \rightarrow \mathbb{N}
\end{array}
\right]
\]

\paragraph{PlannedTransfer}
\[
PlannedTransfer \;\defs\;
U \times AC \times U \times AC \times \mathbb{N}^+ \times \mathbb{N}
\]

\paragraph{PlannedTransferStore}
\[
PlannedTransferStore \;\defs\;
\left[
\begin{array}{l}
P : \mathcal{P}(PlannedTransfer)
\end{array}
\right]
\]

\paragraph{OneTimeTransfer}
\[
OneTimeTransfer \;\defs\;
\left[
\begin{array}{l}
\Delta AccountState \\
srcU, dstU : U \\
srcAC, dstAC : AC \\
f : \mathbb{N}^+ \\
\hline
balances(srcAC) \ge f \\
balances' = balances \oplus
\{
srcAC \mapsto balances(srcAC) - f,\;
dstAC \mapsto balances(dstAC) + f
\}
\end{array}
\right]
\]

\paragraph{OneTimeTransferFail}
\[
OneTimeTransferFail \;\defs\;
\left[
\begin{array}{l}
\Xi AccountState \\
srcU, dstU : U \\
srcAC, dstAC : AC \\
f : \mathbb{N}^+ \\
\hline
balances(srcAC) < f
\end{array}
\right]
\]

\paragraph{SchedulePlannedTransfer}
\[
SchedulePlannedTransfer \;\defs\;
\left[
\begin{array}{l}
\Delta PlannedTransferStore \\
srcU, dstU : U \\
srcAC, dstAC : AC \\
f : \mathbb{N}^+ \\
t' : \mathbb{N} \\
\hline
P' = P \cup \{ (srcU, srcAC, dstU, dstAC, f, t') \}
\end{array}
\right]
\]

\paragraph{ExecutePlannedTransfers}
\[
ExecutePlannedTransfers \;\defs\;
\left[
\begin{array}{l}
\Delta AccountState \\
\Delta PlannedTransferStore \\
t : \mathbb{N} \\
\hline
\forall (u_i, ac_i, u_j, ac_j, f, t') \in P @ \\
\quad t' = t \land balances(ac_i) \ge f \implies \\
\quad balances' =
balances \oplus
\{
ac_i \mapsto balances(ac_i) - f,\;
ac_j \mapsto balances(ac_j) + f
\} \\
P' = \{ p \in P \mid \pi_6(p) \neq t \}
\end{array}
\right]
\]

\end{aireply}

The AI addressed the two issues correctly. 


\paragraph{Invariant proof using the Z modeling of the interfaces}

Next we attempt to proof the invariants that were prove above using the Z modeling of the interfaces. 

\begin{humanprompt}

Given some legal configuration of the system and consider the inductive set of configurations created by the applications of the schema defined above, namely, OneTimeTransfer, OneTimeTransferFail, SchedulePlannedTransfer, and ExecutePlannedTransfers.  We refer to that inductive set as I.  Prove the following claim. 

\begin{lemma}
\label{assertion}
The amount of money $\sum_{ac_i \in AC} ac_i$ remains constant for all configurations in the inductive set I.  
\end{lemma}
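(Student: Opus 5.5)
The plan is a structural induction on the construction of $I$, reusing Lemma~\ref{lem:money} as the per-step engine. First I would fix the reading of the statement as the AI's earlier review required. The quantity is $\Sigma(\sigma)=\sum_{ac\in AC}balances(ac)$ for a configuration $\sigma=(balances,P)$, not the ill-formed sum of account identifiers. The claim is then: for every $\sigma\in I$, $\Sigma(\sigma)=\Sigma(\sigma_0)$, where $\sigma_0$ is the legal initial configuration from which $I$ is generated. Since $AC$ is finite and $balances:AC\fun\nat$ is total, $\Sigma$ is well defined on every configuration. Totality is preserved because each $\oplus$ override only rebinds elements of $AC$ already in the domain.

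\emph{Base and step.} For the base case, $\Sigma(\sigma_0)=\Sigma(\sigma_0)$ trivially. For the step, assume $\sigma\in I$ with $\Sigma(\sigma)=\Sigma(\sigma_0)$, and let $\sigma'$ be obtained by one schema application. I would case-split on the four schemas.
\begin{itemize}
\item \textit{OneTimeTransfer}: we have $balances'=balances\oplus\{srcAC\mapsto balances~srcAC-f,\ dstAC\mapsto balances~dstAC+f\}$, which is exactly the $B\mapsto B'$ update of Lemma~\ref{lem:money}. Hence $\Sigma(\sigma')=\Sigma(\sigma)$. The precondition $balances~srcAC\ge f$ is needed only to keep the result in $\nat$.
\item \textit{OneTimeTransferFail}: it includes $\Xi AccountState$, so $balances'=balances$.
\item \textit{SchedulePlannedTransfer}: it declares only $\Delta PlannedTransferStore$. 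I would make explicit the Z framing convention that $AccountState$ is untouched, so $balances'=balances$.
\end{itemize}
Each of these three cases preserves $\Sigma$.

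\emph{Main obstacle: ExecutePlannedTransfers.} This case needs care because the displayed schema is not a clean single update. Its universally quantified predicate would, read literally, force $balances'$ to equal several different single overrides when two transfers are due at time $t$. It is consistent only if there is at most one due transfer, or if it is read as sequential execution. I would therefore first fix its intended semantics, in line with the natural-language specification: enumerate the due set $D=\{p\in P\mid \pi_6(p)=t\}$ (finite, since $P$ is built from the finite empty set by finitely many insertions), and apply to each $p\in D$ in some order either the \textit{OneTimeTransfer} update or the \textit{OneTimeTransferFail} identity, according to the current balance. An inner induction on $|D|$ then shows $\Sigma$ is unchanged, each step again by Lemma~\ref{lem:money} or by identity. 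Since that inner argument holds for every ordering, the result does not depend on the order chosen. Setting $P'=P\setminus D$ does not affect $\Sigma$.

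Combining the four cases closes the induction, so $\Sigma$ is constant on all of $I$. I would add a remark that the ownership preconditions play no role in the argument. I would also flag the repaired reading of \textit{ExecutePlannedTransfers} as a specification ambiguity surfaced by the proof.
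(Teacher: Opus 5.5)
Your proposal is correct and is essentially the paper's proof: both argue by induction on the construction of $I$ with a case split over the four schemas, and both treat \textit{ExecutePlannedTransfers} as a finite composition of balance-preserving one-time updates; your explicit frame condition for \textit{SchedulePlannedTransfer} and your inner induction on $|D|$ make rigorous what the paper's sketch states in one line. One sharpening of your side remark: the more serious defect of the literally displayed \textit{ExecutePlannedTransfers} schema in the appendix is not over-constraint when two transfers are due (that only makes the step unsatisfiable, which is harmless for a reachability invariant), but under-constraint when no due transfer succeeds, since the universally quantified implication is then vacuous and nothing pins down $balances'$; your sequential reading repairs this as well, because an empty or all-failing composition is the identity.
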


\end{humanprompt}

\paragraph{Stronger invariant claim.}
Note that the following claim and its proof are strictly stronger than those presented in the previous sections.
We show that the inductively defined set \( I(C,\mathit{SCHEMA}) \), obtained from any legal initial
configuration \( C \) by repeated application of the schemas in the specification (denoted
by \( \mathit{SCHEMA} \)), satisfies the invariant claim.
In other words, every configuration reachable by any finite sequence of applications of the
operations defined by the schemas preserves the invariant.

As a consequence, the proof is not tied to any particular simulation or execution strategy.
It applies uniformly to \emph{all} possible sequences of schema applications starting from a
legal configuration.
This establishes the invariant as a global property of the specification itself, rather than
a property of a specific operational interpretation or execution model.

\begin{aireply}

Below is a Z-style invariant proof formatted so that displayed lines are short and fit comfortably within a page.

\paragraph*{Lemma}

Let \( I \) be the inductive set of configurations generated from an initial
legal configuration by finite applications of the schemas
\[
OneTimeTransfer,\;
OneTimeTransferFail,\;
SchedulePlannedTransfer,\;
ExecutePlannedTransfers.
\]

\begin{lemma}
\label{assertion}
For every configuration in \( I \), the total amount of money
\[
\sum_{ac \in AC} balances(ac)
\]
remains constant.
\end{lemma}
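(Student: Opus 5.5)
The plan is structural induction on the construction of $I$, with the hypothesis that every configuration $c\in I$ satisfies $\Sigma(c)=\Sigma(c_0)$. Here $\Sigma(c)=\sum_{ac\in AC}balances_c(ac)$ and $c_0$ is the legal initial configuration. Since $AC$ is finite and $balances$ is total into $\mathbb{N}$, $\Sigma$ is well defined on every configuration. The \emph{base case} is trivial: the only generator of $I$ is $c_0$ itself. For the \emph{inductive step} I would take $c\in I$ with $\Sigma(c)=\Sigma(c_0)$ and a configuration $c'$ obtained from $c$ by one schema application. It then suffices to show $\Sigma(c')=\Sigma(c)$, and I would do this by cases on the schema used.

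\textit{OneTimeTransfer}: its postcondition $balances'=balances\oplus\{srcAC\mapsto balances(srcAC)-f,\;dstAC\mapsto balances(dstAC)+f\}$ is exactly the update $B'$ of Lemma~\ref{lem:money}, with $srcA=srcAC$, $tgtA=dstAC$ and $amt=f$. So that lemma gives the result directly. The precondition $balances(srcAC)\ge f$ ensures the natural-number subtraction is exact, so the algebraic cancellation is legitimate in $\mathbb{N}$. \textit{OneTimeTransferFail} includes $\Xi AccountState$, so $balances'=balances$ and the sum is unchanged. \textit{SchedulePlannedTransfer} has only $\Delta PlannedTransferStore$ and no $\Delta AccountState$. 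I read this under the usual Z convention as leaving $balances$ unchanged. I would state this framing assumption explicitly, since as written the schema does not mention $AccountState$ at all.

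\textit{ExecutePlannedTransfers}$(t)$ is the step I expect to be the main obstacle. As displayed, its predicate quantifies over all $p\in P$ and asserts a single $balances'$ for each due $p$. Read literally, this is either contradictory or underdetermined when several transfers are due at the same $t$, or when some due transfer fails. First, I would fix the intended semantics. Let $D=\{p\in P\mid \pi_6(p)=t\}$, which is finite because it is a subset of a set built up by finitely many scheduling steps. Enumerate $D$ in some order $p_1,\dots,p_r$ and define $balances'$ as the result of applying, in sequence, the \textit{OneTimeTransfer} or \textit{OneTimeTransferFail} update for each $p_i$, with its guard evaluated on the intermediate state. Then I would prove $\Sigma$ is preserved by an inner induction on $r$. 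The case $r=0$ is the identity. In the step, each component is either an instance of Lemma~\ref{lem:money} or a $\Xi$ step, by the two previous cases. The update $P'=P\setminus D$ does not touch balances. The argument does not depend on the chosen ordering, so conservation holds for whatever order an implementation uses.

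Combining the four cases closes the outer induction, so $\Sigma(c)=\Sigma(c_0)$ for all $c\in I$. I would end by noting that nothing in the argument used the ownership preconditions or the particular simulation configuration. The result therefore holds for arbitrary finite $U$ and $AC$ and for every finite sequence of schema applications.
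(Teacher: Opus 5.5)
Your proof is correct and takes the same route as the paper's: induction on the construction of $I$ with one case per schema. As in the paper, \textit{OneTimeTransfer} cancels $-f$ against $+f$, \textit{OneTimeTransferFail} and \textit{SchedulePlannedTransfer} leave the balances untouched, and \textit{ExecutePlannedTransfers} is treated as a finite composition of balance-preserving updates; your explicit frame assumption and your sequential semantics with an inner induction make precise what the paper only asserts, without changing the argument.
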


\subparagraph*{Proof}

The proof proceeds by induction on the construction of the inductive set \( I \).

\paragraph{Base case.}
Let the initial configuration be given by
\[
AccountState(balances_0),
\qquad
balances_0 : AC \rightarrow \mathbb{N}.
\]
The quantity
\[
\sum_{ac \in AC} balances_0(ac)
\]
is fixed. Hence the invariant holds initially.

\paragraph{Inductive hypothesis.}
Assume that for some configuration in \( I \),
\[
\sum_{ac \in AC} balances(ac) = M,
\]
for a constant \( M \).

\paragraph{Inductive step.}
We show that the invariant is preserved by each schema.

\subparagraph{Case 1: \textnormal{OneTimeTransfer}.}
The schema defines
\[
balances' =
balances \oplus
\{
srcAC \mapsto balances(srcAC) - f,
\;
dstAC \mapsto balances(dstAC) + f
\},
\]
with the precondition \( balances(srcAC) \ge f \).
All other balances are unchanged. Therefore,
\[
\sum_{ac \in AC} balances'(ac)
=
\sum_{ac \in AC} balances(ac)
- f + f
=
\sum_{ac \in AC} balances(ac).
\]

\subparagraph{Case 2: \textnormal{OneTimeTransferFail}.}
This schema includes \( \Xi AccountState \), hence
\[
balances' = balances.
\]
The invariant is preserved.

\subparagraph{Case 3: \textnormal{SchedulePlannedTransfer}.}
This schema modifies only the planned transfer store and leaves
the account state unchanged:
\[
balances' = balances.
\]
The invariant is preserved.

\subparagraph{Case 4: \textnormal{ExecutePlannedTransfers}.}
Each executed planned transfer updates balances as follows:
\[
\begin{aligned}
balances(ac_i) &\mapsto balances(ac_i) - f, \\
balances(ac_j) &\mapsto balances(ac_j) + f,
\end{aligned}
\]
subject to \( balances(ac_i) \ge f \).
Each update preserves the total sum of balances. Since
\textnormal{ExecutePlannedTransfers} is a finite composition of
such updates, and updates to the planned transfer store do not
affect balances, the invariant is preserved.

\paragraph{Conclusion.}
All schemas generating \( I \) preserve the total balance invariant.
Since it holds in the initial configuration and is preserved by every
inductive step, it holds for all configurations in \( I \):
\[
\forall c \in I @
\sum_{ac \in AC} balances(ac) = \text{constant}.
\]
\hfill\(\Box\)

\end{aireply}

\paragraph{Have the AI suggest the interfaces}

In this subsection we attempt to have the AI create the interfaces with no inhints on their possible content.  

\begin{humanprompt}
 
The Z formal specification notation has the notion of a schema.  A Z schema denotes a set of states or state transitions constrained by logic; at the code level it corresponds to a typed interface with invariants and operation contracts.  

\textbf{ Modify the following specification to use the Z schema notation to explicitly distinguish interfaces that apply to the specification below.  Leave other aspects of the specification and do not translate them to Z. }

Here is the specification \ref{scheudleTransfer}

\end{humanprompt}

Here is the AI reply. 

\begin{aireply}

The specification below is minimally augmented with
\emph{Z schema notation} in a package-free form.
Schemas are rendered using standard math displays,
making interfaces, invariants, and contracts explicit.
All other aspects remain informal and unchanged.

\bigskip

\textbf{Basic Sets and Types}

\[
[U,\; AC]
\]

\[
N = \mathbb{N}
\qquad
N^{+} = \mathbb{N} \setminus \{0\}
\]

\[
Transfer =
U \times AC \times U \times AC \times N^{+} \times N
\]

\bigskip

\textbf{Global State Interface}

\[
\begin{array}{l}
FundTransferState \;\;\widehat{=} \\[2mm]
\quad
\left[
\begin{array}{l}
accounts : AC \rightarrow N \\
owner    : AC \rightarrow U \\
time     : N \\
P        : \mathcal{P}(Transfer)
\end{array}
\;\middle|\;
\begin{array}{l}
\forall ac_1, ac_2 : AC \cdot \\
\;\; owner(ac_1) = owner(ac_2) \\
\;\; \Rightarrow ac_1 = ac_2
\end{array}
\right]
\end{array}
\]

\bigskip

\textbf{One-Time Transfer Interface}

\[
\begin{array}{l}
OneTimeTransfer \;\;\widehat{=} \\[2mm]
\quad
\left[
\begin{array}{l}
\Delta FundTransferState \\
u_s, u_t : U \\
ac_s, ac_t : AC \\
f : N^{+}
\end{array}
\;\middle|\;
\begin{array}{l}
owner(ac_s) = u_s \\
owner(ac_t) = u_t
\end{array}
\right]
\end{array}
\]

\medskip

\textbf{Transfer Contract}

If sufficient funds exist:
\[
accounts(ac_s) \ge f
\]

\[
accounts' =
accounts \oplus
\{
ac_s \mapsto accounts(ac_s) - f,\;
ac_t \mapsto accounts(ac_t) + f
\}
\]

Otherwise:
\[
accounts(ac_s) < f
\;\Rightarrow\;
accounts' = accounts
\]

\bigskip

\textbf{Planned Transfer Scheduling Interface}

\[
\begin{array}{l}
SchedulePlannedTransfer \;\;\widehat{=} \\[2mm]
\quad
\left[
\begin{array}{l}
\Delta FundTransferState \\
u_s, u_t : U \\
ac_s, ac_t : AC \\
f : N^{+} \\
t_p : N
\end{array}
\;\middle|\;
\begin{array}{l}
owner(ac_s) = u_s \\
owner(ac_t) = u_t \\
t_p > time
\end{array}
\right]
\end{array}
\]

\[
P' =
P \cup
\{(u_s, ac_s, u_t, ac_t, f, t_p)\}
\]

\bigskip

\textbf{Planned Transfer Execution Interface}

\[
\begin{array}{l}
ExecutePlannedTransfer \;\;\widehat{=} \\[2mm]
\quad
\left[
\begin{array}{l}
\Delta FundTransferState \\
p : Transfer
\end{array}
\;\middle|\;
\begin{array}{l}
p \in P \\
\pi_6(p) = time
\end{array}
\right]
\end{array}
\]

\[
P' = P \setminus \{ p \}
\]

The balance update obeys the
\texttt{OneTimeTransfer} contract.

\bigskip

\textbf{Simulation Clock Interface}

\[
\begin{array}{l}
AdvanceTime \;\;\widehat{=} \\[2mm]
\quad
\left[
\begin{array}{l}
\Delta FundTransferState
\end{array}
\;\middle|\;
\begin{array}{l}
time' = time + 1
\end{array}
\right]
\end{array}
\]

\bigskip

\textbf{Summary}

\begin{itemize}
  \item Z-style schemas are expressed using plain LaTeX.
  \item Interfaces, invariants, and contracts are explicit.
  \item No additional packages are required.
  \item The original informal specification is preserved.
\end{itemize}

\end{aireply}

On the positive, side the AI was able to identify the major interfaces by itself.  On the negative side, it mixed schema with natural language description which is confusing.  It also made the mistake of referring to $\pi_6(p)$ which is a specific simulation detail.  It also took away details such as the simulation initialization.

\end{document}